\documentclass[11pt,a4paper,titlepage]{article}
\usepackage[utf8x]{inputenc}
\usepackage[T2A]{fontenc}
\usepackage[english]{babel}
\usepackage{mathrsfs}
\usepackage{ucs}
\usepackage{amsmath}
\usepackage{amsfonts}
\usepackage{amssymb}
\usepackage{amsthm}
\usepackage{longtable}
\usepackage{enumerate}
\usepackage{bussproofs}
\usepackage{proof}
\usepackage{xcolor}
\usepackage[width=0.00cm, height=0.00cm, left=2.00cm, right=2.00cm, top=2.00cm, bottom=2.00cm]{geometry}
\usepackage{url}

\theoremstyle{definition}
\newtheorem{defn}{Definition}[section]
\newtheorem{theorem}{\textbf{Theorem}}
\newtheorem{lemma}{\textbf{Lemma}}

\newcommand{\1}{\vspace{.1in}}

\newcommand{\Ex}{E}
\usepackage{units}
 \newcommand{\half}{\nicefrac{1}{2}}
  \newcommand{\val}{\mathcal{V}}
\newcommand{\bc}{\begin{center}}
	\newcommand{\ec}{\end{center}}
\usepackage{bussproofs,longtable}
\usepackage[breaklinks,colorlinks,citecolor=blue,urlcolor=blue]{hyperref}

\newenvironment{Published}{\begin{list}{}{\leftmargin=25pt}
    \item[]\small\noindent{To be published as}}%
    {\end{list}}
    \newenvironment{Funding}{\begin{list}{}{\leftmargin=25pt}
        \item[]\small\noindent{The research}}%
        {\end{list}}
\newenvironment{Abstract}{\begin{list}{}{\leftmargin=25pt}
    \item[]\footnotesize\noindent{\sc \bf Abstract:}}%
{\end{list}}
\newenvironment{keywords}{\begin{list}{}{\leftmargin=25pt}
    \item[]\footnotesize\noindent{\sc \bf Keywords:}}%
{\end{list}}

\begin{document}
\begin{center}
\vspace{0.3cm}
{\large Andrzej Indrzejczak, 
Yaroslav Petrukhin\vspace{0.3cm}\\
\textbf{Bisequent Calculi for Neutral Free Logic with Definite Descriptions}}

\ \\
\begin{footnotesize}
\noindent Department of Logic, University of Lodz, Poland,  \url{andrzej.indrzejczak@filhist.uni.lodz.pl},\\
\noindent Center for Philosophy of Nature, University of Lodz, Poland,  \url{yaroslav.petrukhin@gmail.com}\\
\end{footnotesize}
\end{center}
\begin{Published}
 Indrzejczak, A., Petrukhin, Y. (2024). Bisequent Calculi for Neutral Free Logic with Definite Descriptions. 
ARQNL 2024: Automated Reasoning in Quantified
  Non-Classical Logics, 1 July 2024, Nancy, France. 
%\DOI{http://dx.doi.org/}{ }
\bigskip
\end{Published}
\begin{Funding}
in this paper was funded by the European Union
 (ERC, ExtenDD, project number: 101054714). Views and opinions expressed
 are however those of the author(s) only and do not necessarily reflect those of
 the European Union or the European Research Council. Neither the European
 Union nor the granting authority can be held responsible for them.
\end{Funding}
 \begin{Abstract}
	We present a bisequent calculus (BSC) for the minimal theory of definite descriptions (DD) in the setting of neutral free logic, where formulae with non-denoting terms have no truth value. The treatment of quantifiers, atomic formulae and simple terms is based on the approach developed by Pavlovi\'{c} and Gratzl. We extend their results to the version with identity and definite descriptions. In particular, the admissibility of cut is proven for this extended system.
 \end{Abstract}
 \begin{keywords}
Bisequent Calculus, 
Cut Elimination, 
Three-valued Logic, 
Neutral Free Logic, 
First-order Logic, 
Definite Descriptions
 \end{keywords}

\section{Introduction}

There is a variety of logics called free logics (FL) and differing in many respects. The common feature is that singular terms are free from existential assumptions, i.e. they are not assumed to denote an existing object. On the other hand, in all free logics quantifiers are assumed to have an existential import. One of the main divisions of this kind of logics is based on the treatment of atomic formulae with non-denoting terms. In positive FL they can be true, in negative FL they are always false, in neutral FL (NFL) they are neither true nor false. This makes NFL in a sense a kind of partial or three-valued logic, with the third value expressing the truth-value gap.

We focus on the proof-theoretic perspective on free logics, namely the examination of sequent calculi (SC). In case of positive and negative FL one may find several studies, due to Maffezioli and Orlandelli \cite{MaffezioliOrlandelli}, Pavlovi\'{c} and Gratzl \cite{PavlovicGratzl21}, or Indrzejczak \cite{IndrzejczakFreeLogic}. However, in case of NFL the situation is worse in general, and particularly poor in the field of proof theory. In addition to the papers of Woodruff \cite{Woodruff} and Lehmann \cite{Lehmann}, where one may find some kind of natural deduction and tableau system, only Pavlovi\'{c} and Gratzl \cite{PavlovicGratzl23} recently investigated SC for NFL based on strong and weak Kleene logics \cite{Kleene}. 

We are going to continue the research carried out in \cite{PavlovicGratzl23} in two ways: first, we reformulate the results from \cite{PavlovicGratzl23} using our own approach based on bisequent calculi (BSC), developed in \cite{IndrzejczakPetrukhin,Indrzejczak3}; second, we extend the resulting proof systems to cover identity and definite descriptions (DD). 
These complex terms are usually divided into proper (denoting a unique object, like `the present King of England') and improper (like non-denoting `the present King of France' or not unique `the author of Principia Mathematica'). Improper DD are very troublesome and it seems that NFL is quite natural place for exploring their behaviour. Usually, DD are formalised by means of term-forming $ \imath $-operator applied to a variable $ x $ and formula $ \varphi $ to obtain a term $ \imath x \varphi $. 
It is worth noting that there is an alternative treatment of DD based on the application of binary quantifier and recently developed by K\"{u}rbis \cite{Kurbis1,Kurbis2,Kurbis3}. We follow here a more standard, $ \imath $-operator based approach.

The theories of DD in positive or negative free logics are usually based on Lambert axiom \eqref{L}:
\begin{align}
\label{L}\tag{$L$} 
\forall y (y=\imath x \varphi  \leftrightarrow \forall x (\varphi\leftrightarrow y=x))
\end{align}

\noindent where $\imath x\varphi$ is closed and does not have any occurrence of $y$.
It is the weakest  principle characterising the behaviour of proper DD.  
In our research, devoted to NFL, we consider \eqref{L} as well but
to avoid using $\leftrightarrow$, which considerably complicate the set of required rules and proofs in the setting of Kleene's logic, we will use instead two axioms which together express the same minimal theory of DD:
\begin{align}
\label{L1}\tag{$L^\rightarrow$} 
\forall y (y=\imath x \varphi  \rightarrow \varphi[x/y] \wedge \forall x (\varphi\rightarrow y=x))\\
\label{L2}\tag{$L^\leftarrow$} \forall y (\varphi[x/y]\wedge\forall x(\varphi\rightarrow y=x) \rightarrow y=\imath x\varphi)
\end{align}

Moreover, we add a restriction:  $\imath x\varphi$ contains no other DD inside. There is a possible treatment of this theory admitting nested DD but it leads to the formulation of rules which do not allow us to prove the admissibility of cut. 

Various definitions of the logical connectives and quantifiers can be considered in many-valued logics in general, and the choice of them has an influence also on the behaviour of DD. In this study, following \cite{PavlovicGratzl23}, we examine two variants of NFL, based on strong and weak Kleene's \cite{Kleene} interpretation of connectives. 

The goal of the paper is to provide a uniform  proof-theoretic treatment of NFL with identity and DD. Regarding the systems for DD in the setting of other logics, recently a great progress has been made in the works of Fitting and Mendelsohn \cite{fitt:fir98}, Orlandelli \cite{Orlandelli2021} and Indrzejczak \cite{IndrzejczakRussellDD,IndZaw2,IndNils}. The paper \cite{IndrzejczakFreeDD}, particularly important for us, is devoted to sequent calculi for the theories of DD based on positive and negative free logics. Currently we are going to continue this research and examine the behaviour of DD in NFL, however this logic requires some nonstandard sequent basis, like the one originally applied in \cite{PavlovicGratzl23}. We prefer to use for this aim a slightly different framework, namely a bisequent calculus (BSC) \cite{IndrzejczakPetrukhin,Indrzejczak3}, since it has shown its usefulness in the field of formalisation of three- and four-valued logics. Moreover, the rules of the generalised sequent calculus from \cite{PavlovicGratzl23} are easily translatable into BSC.
Summing up: the present paper is a combination and a continuation of the research presented in \cite{PavlovicGratzl21,IndrzejczakFreeLogic} (SC for negative and positive free logics), \cite{PavlovicGratzl23} (SC for neutral free logics), \cite{IndrzejczakFreeDD} (SC for DD theories based on negative and positive free logics), and \cite{IndrzejczakPetrukhin,Indrzejczak3} (BSC for many-valued logics).

The structure of the paper is as follows. Section \ref{Logics} contains some preliminaries related to the language and semantics. Section \ref{StrongKleene} is devoted to the introduction of the BSC for two variants of NFL based on strong and weak Kleene's logic. 
Section \ref{Cut} is concerned with the constructive proof of cut admissibility.
Section \ref{Conclusion} consists of some concluding remarks and open problems.

\section{Preliminaries}\label{Logics}
Following \cite{PavlovicGratzl23}, we consider the subsequent language but with added conjunction, for dealing with \eqref{L}.
\begin{defn}\cite[Definition 1.1]{PavlovicGratzl23}
	The alphabet of the language $ \mathcal{L} $ consists of:
	\begin{enumerate}[$ (1) $]\itemsep=0pt
		\item Terms: $ t, s, \ldots, $ consisting of:
		\begin{enumerate}[$ (a) $]\itemsep=0pt
			\item Denumerable list of free individual variables \textup{(}parameters\textup{)}: $  a,b, c, \ldots $
			\item Denumerable list of bound individual variables: $ x, y, z,  \ldots $
		\end{enumerate}
		\item Denumerable list of $ n $-ary predicates,  including a unary predicate $ E $
		\item $ \neg,\rightarrow, \wedge, \forall, (, ) $.
	\end{enumerate}
\end{defn}

For our purposes we extend the language $ \mathcal{L} $ with identity and $ \imath $-operator.
\begin{defn}
	The alphabet of the language $ \mathcal{L}_\imath^= $ consists of all symbols of the alphabet of the language $ \mathcal{L} $ as well as $ = $ and $ \imath $.
\end{defn}

The notions of terms and formulae of the languages $ \mathcal{L} $ and $ \mathcal{L}_\imath^= $ are defined by simultaneous recursion. Note that in $ \mathcal{L}_\imath^= $, $t, s$ represent arbitrary terms, including DD. Complexity of formulae and terms is measured by the number of logical symbols (including predicates $E$ and =).
$\varphi[t_1/t_2] $ is used for the operation of correct substitution of an arbitrary term
$ t_2 $ for all occurrences of a variable/parameter $ t_1 $ in $ \varphi $, and similarly
$ \Gamma[t_1/t_2] $ for a uniform substitution in all formulae in $ \Gamma $. 
We use a simplified semantics from \cite{PavlovicGratzl23,PavlovicGratzl21} to provide interpretation of this language.

\begin{defn}[Neutral structure $ \mathcal{S}_{nt} $]\cite[Definition 4.1]{PavlovicGratzl23} \cite[Definition 3.1]{PavlovicGratzl21}
	A neutral structure $ \mathcal{S}_{nt} $ is a pair $ \langle \mathcal{D}, \mathcal{I}\rangle $,
	where $\mathcal{D} = a_1, \ldots , b_1, \ldots$ is a countable list of parameters, and $ \mathcal{I} $ is an
	interpretation function on $ \mathcal{L} $ \textup{(}$ \mathcal{L}_\imath^= $\textup{)}:
	\begin{itemize}\itemsep=0pt
		\item $\mathcal{I}(t) = t $, where $ t\in \mathcal{D} $,
		\item $\mathcal{I}(\Ex)\subseteq\mathcal{D} $,
		\item $\mathcal{I}(=)=Ref \cup Id $, closed under symmetry and transitivity, where 
		\begin{itemize}\itemsep=0pt
			\item 
			$ Ref = \{\langle t,t\rangle \mid t \in \mathcal{I}(\Ex)\} $,
			\item 
			$ Id \subseteq \mathcal{I}(\Ex) \times \mathcal{I}(\Ex) $, 		
		\end{itemize}
		\item $ \mathcal{I}(P^n) \subseteq \mathcal{I}(\Ex)^n $ such that if $ \langle s, t\rangle \in \mathcal{I}(=) $, then $ \langle \ldots , s_i, \ldots\rangle \in \mathcal{I}(P^n)$ iff
		$ \langle\ldots , t_i, \ldots\rangle\in \mathcal{I}(P^n) $, for any $ n $ and any $ 1 \leqslant i \leqslant n $.
	\end{itemize}
\end{defn}

Note that identity is in principle treated as other predicates but it cannot be defined in this way since domains of models contain just parameters, so it should be defined as a condition on models like in \cite{PavlovicGratzl21}. 
$\mathcal{I}$ is extended to cover descriptions $\imath x\varphi$, where $\varphi$ does not contain other DD, by partial mapping from the set of so restricted DDs to $\mathcal{D}$. In case $\mathcal{I}(\imath x\varphi)=a$, for some $a\in \mathcal{I}(\Ex)$, we say that $\imath x\varphi$ is defined and assume that it satisfies the following condition: 
\[\mathcal{I}(\imath x \varphi)= 
\begin{cases}
\mbox{$a$ iff $\mathcal{V}^i(\varphi[x/a])=1$ and $\mathcal{V}^i(\varphi[x/b])=0$ for every $a\neq b\in \mathcal{I}(\Ex)$;} \\
\mbox{otherwise it is undefined}.
\end{cases}\]

There is a matter of debate (cf. \cite{Lehmann}) which Kleene's logic, strong or weak, provides better option for interpreting truth-value gaps in NFL. Pavlovi\'{c} and Gratzl \cite{PavlovicGratzl23} consider both options and we follow their approach. Accordingly
$\mathcal{V}^i$ may be either weak ($i = w$) or strong ($i = s$), and is defined as follows:
\begin{defn}[Weak valuation $ \mathcal{V}^w $]\textup{(}An extended version of \cite[Definition 4.2]{PavlovicGratzl23}\textup{)}\label{WeakVal}
	The truth-value assignment $ \mathcal{V}^w $ on the structure
	$ \langle \mathcal{D},\mathcal{I}\rangle $ is defined as follows:
	\begin{align} 
	\mathcal{V}^w(\Ex t) &=\left\{\begin{array}{cll}
	1&\hbox{~iff~}t\in\mathcal{I}(\Ex),\\
	0&\hbox{~iff~otherwise};\\
	\end{array}\right.\\
	\mathcal{V}^w(P^n(t_1,\ldots,t_n))&=\left\{\begin{array}{cll}
	1&\hbox{~iff~}\langle t_1,\ldots,t_n\rangle\in \mathcal{I}(P^n),\\
	\half & \hbox{~iff~for~some~}1\leqslant i\leqslant n,t_i\not\in \mathcal{I}(\Ex),\\
	0&\hbox{~iff~otherwise};\\
	\end{array}\right.\\
	\mathcal{V}^w(\neg \varphi)&=\left\{\begin{array}{cll}
	1&\hbox{~iff~}\mathcal{V}^w(\varphi)=0,\\
	\half &\hbox{~iff~}\mathcal{V}^w(\varphi)=\half ,\\
	0&\hbox{~iff~}\mathcal{V}^w(\varphi)=1;\\
	\end{array}\right. \\
	\mathcal{V}^w(\varphi\wedge \psi)&=\left\{\begin{array}{cll}
	1&\hbox{~iff~}\mathcal{V}^w( \varphi)=1\hbox{~and~}\mathcal{V}^w(\psi)=1,\\
	\half &\hbox{~iff~}\mathcal{V}^w( \varphi)=\half\hbox{~or~}\mathcal{V}^w( \psi)=\half ,\\
	0&\hbox{~iff~otherwise};\\
	\end{array}\right.  \\		
	\mathcal{V}^w(\varphi\rightarrow \psi)&=\left\{\begin{array}{cll}
	0&\hbox{~iff~}\mathcal{V}^w( \varphi)=1\hbox{~and~}\mathcal{V}^w( \psi)=0,\\
	\half &\hbox{~iff~}\mathcal{V}^w( \varphi)=\half\hbox{~or~}\mathcal{V}^w( \psi)=\half ,\\
	1&\hbox{~iff~otherwise};\\
	\end{array}\right.  \\
	\mathcal{V}^w(\forall x \varphi)&=\left\{\begin{array}{cll}
	1&\hbox{~iff~for~every~}t\in\mathcal{I}(\Ex),\mathcal{V}^w(\varphi[x/t])=1,\\
	0&\hbox{~iff~for~some~}t\in\mathcal{I}(\Ex),\mathcal{V}^w(\varphi[x/t])=0,\\
	\half&\hbox{~iff~otherwise}.\\
	\end{array}\right.		   		     
	\end{align}
\end{defn}

\begin{defn}[Strong valuation $ \mathcal{V}^s $]\cite[Definition 4.3]{PavlovicGratzl23}
	The truth-value assignment $ \mathcal{V}^s $ on the structure
	$ \langle \mathcal{D},\mathcal{I}\rangle $ is defined as follows \textup{(}the other cases are defined as in Definition \ref{WeakVal}\textup{)}:
	\begin{align} 
	\tag{$ 4^\prime $} \mathcal{V}^s(\varphi\wedge \psi)=\left\{\begin{array}{cll}
	0&\hbox{~iff~}\mathcal{V}^s(\varphi)=0\hbox{~or~}\mathcal{V}^s(\psi)=0,\\
	1 &\hbox{~iff~}\mathcal{V}^s(\varphi)=1\hbox{~and~}\mathcal{V}^s(\psi)=1 ,\\
	\half&\hbox{~iff~otherwise}.\\
	\end{array}\right.\\
	\tag{$ 5^\prime $} \mathcal{V}^s(\varphi\rightarrow \psi)=\left\{\begin{array}{cll}
	0&\hbox{~iff~}\mathcal{V}^s(\varphi)=1\hbox{~and~}\mathcal{V}^s(\psi)=0,\\
	1 &\hbox{~iff~}\mathcal{V}^s(\varphi)=0\hbox{~or~}\mathcal{V}^s(\psi)=1 ,\\
	\half&\hbox{~iff~otherwise}.\\
	\end{array}\right. 
	\end{align}
\end{defn}

For each of these valuations, we can obtain two different logics by taking the set of designated values $D$ either as $\{ 1 \}$ or $\{1, \half \}$. However, we examine here only the first route, hence in what follows $D$ always denotes $\{ 1 \}$. 
The notion of the entailment (consequence) relation is defined as follows.
\begin{defn}
	For any set of formulas $ \Gamma $ and any formula $ \varphi $, it holds that 
	\begin{center}
		$\Gamma \models\varphi $ iff for any valuation $\mathcal{V}$:
		if $\mathcal{V}(\Gamma)\subseteq D$, then $\mathcal{V}(\varphi)\in D$, where $ D = \{1\} $. 
	\end{center}
\end{defn}

In what follows the strong Kleene's logic will be referred to as $ \bf K_3$ and the weak one as $ \bf K_3^w$.
Although propositional $ \bf K_3$ has an empty set of validities it is not so for related NFL as defined by the semantics above; in particular it holds:

\begin{lemma}\label{LambertAxiomIsValid}
	\eqref{L1} and \eqref{L2} are valid in $ \bf K_3$ and $ \bf K_3^w$.
\end{lemma}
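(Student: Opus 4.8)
The plan is to show that, for each valuation $\mathcal{V}^i$ with $i\in\{s,w\}$ and each neutral structure $\langle\mathcal{D},\mathcal{I}\rangle$, both \eqref{L1} and \eqref{L2} receive the value $1$. Since each axiom has the form $\forall y\,\chi(y)$ and $\forall$ carries existential import, the clause for $\forall$ reduces this to checking that every instance $\chi(t)$ with $t\in\mathcal{I}(\Ex)$ gets the value $1$; the universal then gets $1$ as well. Before the case analysis I would isolate a congruence lemma for $=$: extending, by induction on complexity, the congruence condition imposed on $\mathcal{I}(P^n)$ from atomic formulae to arbitrary $\varphi$, so that $\langle s,t\rangle\in\mathcal{I}(=)$ forces $\mathcal{V}^i(\varphi[x/s])=\mathcal{V}^i(\varphi[x/t])$. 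This lemma is the device that ties the uniqueness encoded in the denotation condition for $\imath x\varphi$ to the behaviour of the identities occurring in the axioms, and it lets me read that uniqueness modulo $\mathcal{I}(=)$ rather than merely modulo distinctness of parameters.

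The argument then splits on whether $\imath x\varphi$ is proper. Suppose $\mathcal{I}(\imath x\varphi)=a$. The denotation condition gives $\mathcal{V}^i(\varphi[x/a])=1$ and $\mathcal{V}^i(\varphi[x/b])=0$ for every existing $b$ distinct from $a$; in particular every instance $\varphi[x/s]$ with $s\in\mathcal{I}(\Ex)$ is two-valued. For \eqref{L1} at $y=a$ the antecedent $a=\imath x\varphi$ is $a=a$, which is $1$ by reflexivity since $a\in\mathcal{I}(\Ex)$, and the consequent $\varphi[x/a]\wedge\forall x(\varphi\rightarrow a=x)$ is $1$ because $a$ is the unique existing witness; hence the conditional is $1$. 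For $y=t$ with $t\neq a$ the congruence lemma rules out $\langle t,a\rangle\in\mathcal{I}(=)$, since otherwise $\varphi[x/t]$ would inherit the value $\varphi[x/a]=1$ and contradict uniqueness, so the antecedent $t=a$ is $0$; a false antecedent makes the conditional $1$, and in the weak scheme one checks in addition that the consequent is two-valued, which holds because $\forall x(\varphi\rightarrow t=x)$ already has the false instance at $x=a$. The treatment of \eqref{L2} is dual: its antecedent $\varphi[x/t]\wedge\forall x(\varphi\rightarrow t=x)$ is $1$ exactly when $t$ is an existing witness with every existing witness $=$-identified with it, which is precisely the condition making $\imath x\varphi=t$, so the consequent $t=\imath x\varphi$ is then $1$.

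When $\imath x\varphi$ is improper it denotes nothing, so every atom $t=\imath x\varphi$ takes the value fixed for identities containing a non-denoting term; the antecedent of \eqref{L1} is then not $1$, and the conditional clauses give the instance the value $1$, while for \eqref{L2} I argue that no existing $t$ can make the antecedent $1$, since that would exhibit a witness unique modulo $=$ and render $\imath x\varphi$ proper. I expect the improper case, and in particular its interaction with the weak scheme, to be the main obstacle: in $\mathbf{K_3^w}$ a single gap in any subformula infects the whole conditional, so the decisive point is to verify that the atoms $t=\imath x\varphi$, the identities $t=s$, and the relevant instances $\varphi[x/s]$ are genuinely two-valued in every case that contributes to the value. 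This is exactly where the treatment of $=$ together with the uniqueness-modulo-identity and congruence machinery must be deployed with care; by contrast the strong scheme is comparatively easy, since there a false antecedent alone already secures a designated conditional.
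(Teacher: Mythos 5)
Your overall route is a legitimate reorganisation of the paper's argument: you verify every instance directly and split on whether $\imath x\varphi$ is proper, whereas the paper argues by contradiction, splitting on whether the axiom takes value $0$ or value $\half$; your explicit congruence lemma (extending the compatibility condition on $\mathcal{I}(P^n)$ to complex $\varphi$) and your insistence on reading uniqueness modulo $\mathcal{I}(=)$ make precise something the paper uses only tacitly when it passes from $\val^s(t=\imath x \varphi)=1$ to ``$t$ satisfies the denotation condition''. The proper case, as you set it up, goes through.

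The gap is in the improper case, and it is not the peripheral verification you take it to be. You write that when $\imath x\varphi$ denotes nothing the antecedent $t=\imath x\varphi$ of \eqref{L1} is ``not $1$'' and that ``the conditional clauses give the instance the value $1$''. That inference is invalid in both schemes: if the antecedent had value $\half$, the instance would evaluate to $\half\rightarrow 0=\half$ in $\mathbf{K_3}$ (and to $\half$ in $\mathbf{K_3^w}$ automatically), so ``not $1$'' buys nothing — you need the antecedent to be exactly $0$. And that is precisely the point at issue: a literal application of the neutral atomic clause (value $\half$ iff some term is not in $\mathcal{I}(\Ex)$) would make $t=\imath x\varphi$ gappy whenever $\imath x\varphi$ is improper, and then \eqref{L1} would simply fail (take any $\varphi$ with no existing witness: the instance is $\half\rightarrow 0$). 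So the lemma stands or falls with the claim that identities involving an improper description are \emph{false}, never gappy; the paper's proof confronts exactly this in its second case, where it rules out $\val^s(t=\imath x\varphi)=\half$ by appeal to the fact that $\mathcal{I}(=)$ relates only members of $\mathcal{I}(\Ex)$, and only then can it conclude the antecedent is $0$ or $1$. Your proposal never supplies this step — you defer it as something ``to be deployed with care'' — and you moreover mislocate it as a weak-Kleene phenomenon: your closing remark that in the strong scheme ``a false antecedent alone already secures a designated conditional'' presupposes the antecedent is false rather than gappy, which is the very thing left unproved. The same omission affects your \eqref{L2} case in $\mathbf{K_3^w}$, where even with a false antecedent the conditional is $1$ only if the consequent $t=\imath x\varphi$ is two-valued.
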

\begin{proof}
 As an example, let us prove that \eqref{L1} is valid in $ \bf K_3$. 
Suppose that \eqref{L1} is not valid in $ \bf K_3$, that is $ \val^s\big(\eqref{L1}\big)\not=1 $. Suppose that $ \val^s\big(\eqref{L1}\big)=0 $.  Then for some $t\in\mathcal{I}(\Ex)$, $ \val^s\big(t=\imath x \varphi  \rightarrow \varphi[x/t] \wedge \forall x (\varphi\rightarrow t=x)\big)=0$. Consequently,  $ \val^s\big(t=\imath x \varphi\big)=1$ and $  \val^s\big( \varphi[x/t] \wedge \forall x (\varphi\rightarrow t=x)\big)=0$. Therefore, $ \val^s\big( \varphi[x/t]\big)=0$ or $ \val^s\big(\forall x (\varphi\rightarrow t=x)\big)=0 $. Since $t\in\mathcal{I}(\Ex)$ and $ \val^s\big(t=\imath x \varphi\big)=1$, $\mathcal{V}^s(\varphi[x/t])=1$ and $\mathcal{V}^s(\varphi[x/b])=0$ for every $t\neq b\in \mathcal{I}(\Ex)$. Since $\mathcal{V}^s(\varphi[x/t])=1$, we get $ \val^s\big(\forall x (\varphi\rightarrow t=x)\big)=0 $. Then for some $u\in\mathcal{I}(\Ex)$, $ \val^s\big(\varphi[x/u]\rightarrow t=u\big)=0$. Hence, $ \val^s\big(\varphi[x/u]\big)=1 $ and $ \val^s\big(t=u\big)=0 $. Since $t\neq u\in \mathcal{I}(\Ex)$, $\mathcal{V}^s(\varphi[x/u])=0$. A contradiction. 
	
	Suppose that $ \val^s\big(\eqref{L1}\big)=\half $. Let $ (H^\rightarrow) $ denote $ t=\imath x \varphi  \rightarrow \varphi[x/t] \wedge \forall x (\varphi\rightarrow t=x) $. Hence, we can infer that either there is $t\not\in\mathcal{I}(\Ex)$ and $ \val^s\big((H^\rightarrow)\big)\in\{1,\half,0 \}$ or there is $t\in\mathcal{I}(\Ex)$ such that $ \val^s\big((H^\rightarrow)\big)=\half$. Suppose that there is $t\not\in\mathcal{I}(\Ex)$ and $ \val^s\big((H^\rightarrow)\big)\in\{1,\half,0 \}$. However, $\mathcal{I}(=)=Ref \cup Id $,  where $ Ref = \{\langle t,t\rangle \mid t \in \mathcal{I}(\Ex)\} $ and
	$ Id \subseteq \mathcal{I}(\Ex) \times \mathcal{I}(\Ex) $. Thus, it cannot be the case that there is $t\not\in\mathcal{I}(\Ex)$ and $ \val^s\big((H^\rightarrow)\big)\in\{1,\half,0 \}$. Hence, there is $t\in\mathcal{I}(\Ex)$ such that $ \val^s\big((H^\rightarrow)\big)=\half$. It cannot be the case $\val^s\big(t=\imath x\varphi\big)=\half$, otherwise $t\not\in\mathcal{I}(\Ex)$ or $\imath x\varphi\not\in\mathcal{I}(\Ex)$, but $ \mathcal{I}(=)\subseteq\mathcal{I}(E)$.  Thus, $\val^s\big(t=\imath x\varphi\big)=1$ and $ \val^s\big(\varphi[x/t] \wedge \forall x (\varphi\rightarrow t=x)\big)=\half $. Since $t\in\mathcal{I}(\Ex)$ and $ \val^s\big(t=\imath x \varphi\big)=1$, $\mathcal{V}^s(\varphi[x/t])=1$ and $\mathcal{V}^s(\varphi[x/b])=0$ for every $t\neq b\in \mathcal{I}(\Ex)$. Since $\mathcal{V}^s(\varphi[x/t])=1$ and $ \val^s\big(\varphi[x/t] \wedge \forall x (\varphi\rightarrow t=x)\big)=\half $, we have 
	$  \val^s\big(\forall x (\varphi\rightarrow t=x)\big)=\half $. Hence, we can infer that either there is $u\not\in\mathcal{I}(\Ex)$ and $ \val^s\big(\varphi[x/u]\rightarrow t=u\big)\in\{1,\half,0 \}$ or there is $u\in\mathcal{I}(\Ex)$ such that $ \val^s\big(\varphi[x/u]\rightarrow t=u\big)=\half$. Due to the properties of $ \mathcal{I}(=) $, only the latter option holds. Since $ \mathcal{I}(=)\subseteq\mathcal{I}(E)$, $ \val^s\big(t=u\big)\not=\half $. Thus, since $ \val^s\big(\varphi[x/u]\rightarrow t=u\big)=\half$, we obtain $ \val^s\big(\varphi[x/u]\big)=\half$ and $ \val^s\big(t=u\big)=0$. Since $t,u\in\mathcal{I}(\Ex)$, $t\not=u$, and $\mathcal{V}^s(\varphi[x/b])=0$ for every $t\neq b\in \mathcal{I}(\Ex)$, we infer that $\mathcal{V}^s(\varphi[x/u])=0$. But we already have that $ \val^s\big(\varphi[x/u]\big)=\half$. A contradiction.    
\end{proof}

\section{Bisequent Calculi for NFL}\label{StrongKleene}

\begin{figure}[t!]
	\centering
	\bgroup

	\noindent\begin{tabular}{ll}

		$(\mathord{\neg}\mathord{\Rightarrow\mid})$ \ $\dfrac{\Gamma
			\Rightarrow \Delta \mid \Pi\Rightarrow\Sigma, \varphi}{\neg\varphi, \Gamma \Rightarrow \Delta \mid \Pi\Rightarrow\Sigma}$ &
		$(\mathord{\Rightarrow}\mathord{\neg\mid})$ \ $\dfrac{\Gamma
			\Rightarrow \Delta \mid \varphi, \Pi\Rightarrow\Sigma}{\Gamma
			\Rightarrow \Delta, \neg\varphi \mid \Pi\Rightarrow\Sigma}$\\[16pt]
		
		$(\mathord{\mid\neg}\mathord{\Rightarrow})$ \ $\dfrac{\Gamma
			\Rightarrow \Delta, \varphi \mid \Pi\Rightarrow\Sigma}{\Gamma \Rightarrow \Delta \mid \neg\varphi, \Pi\Rightarrow\Sigma}$ &
		$(\mathord{\mid\Rightarrow}\mathord{\neg})$ \ $\dfrac{\varphi, \Gamma
			\Rightarrow \Delta \mid \Pi\Rightarrow\Sigma}{\Gamma
			\Rightarrow \Delta \mid \Pi\Rightarrow\Sigma, \neg\varphi}$\\[16pt]
		$(\mathord{\wedge}\mathord{\Rightarrow\mid})$ \ $\dfrac{\varphi, \psi,
			\Gamma \Rightarrow \Delta \mid S}{\varphi\wedge\psi, \Gamma
			\Rightarrow \Delta \mid S}$ &
		$(\mathord{\Rightarrow}\mathord{\wedge\mid})$ \ $\dfrac{\Gamma
			\Rightarrow \Delta, \varphi \mid S \qquad
			\Gamma\Rightarrow \Delta,
			\psi \mid S}{\Gamma \Rightarrow \Delta,
			\varphi\wedge\psi \mid S}$\\[16pt]
		
		$(\mathord{\mid\wedge}\mathord{\Rightarrow})$ \ $\dfrac{S \mid \varphi, \psi,
			\Gamma \Rightarrow \Delta}{S \mid \varphi\wedge\psi, \Gamma
			\Rightarrow \Delta}$ &
		$(\mathord{\mid\Rightarrow}\mathord{\wedge})$ \ $\dfrac{S\mid \Gamma
			\Rightarrow \Delta, \varphi \qquad
			S\mid \Gamma\Rightarrow \Delta,
			\psi}{S\mid\Gamma \Rightarrow \Delta,
			\varphi\wedge\psi }$\\[16pt]
		$(\mathord{\Rightarrow}\mathord{\rightarrow}\mid)$ \, $\dfrac{\Gamma
			\Rightarrow \Delta, \psi \mid \varphi, \Pi\Rightarrow\Sigma}{\Gamma\Rightarrow\Delta, \varphi\rightarrow\psi \mid \Pi\Rightarrow\Sigma}$ &
		$(\mathord{\mid\Rightarrow}\mathord{\rightarrow})$ \, $\dfrac{\varphi, \Gamma
			\Rightarrow \Delta \mid\Pi\Rightarrow\Sigma, \psi}{\Gamma\Rightarrow\Delta \mid \Pi\Rightarrow\Sigma, \varphi\rightarrow\psi}$	
	\end{tabular}
	
	\1
	
	$(\mathord{\rightarrow}\mathord{\Rightarrow}\mid)$ \, $\dfrac{\Gamma
		\Rightarrow \Delta \mid \Pi\Rightarrow\Sigma, \varphi \quad\psi, \Gamma\Rightarrow\Delta \mid
		\Pi \Rightarrow \Sigma } {\varphi\rightarrow\psi, \Gamma
		\Rightarrow \Delta \mid \Pi\Rightarrow\Sigma}$
	\1
	
	$(\mathord{\mid\rightarrow}\mathord{\Rightarrow})$ \, $\dfrac{\Gamma
		\Rightarrow \Delta, \varphi \mid \Pi\Rightarrow\Sigma \quad \Gamma\Rightarrow\Delta \mid \psi,
		\Pi \Rightarrow \Sigma } {\Gamma
		\Rightarrow \Delta \mid \varphi\rightarrow\psi, \Pi\Rightarrow\Sigma}$

	\egroup	
	\caption{Rules of propositional $ \bf K_3$}
	\label{fig::CalculusK3}
\end{figure}

Pavlovi\'{c} and Gratzl \cite{PavlovicGratzl23} formalise NFL by means of some nonstandard SC, where sequents are of the form: $\Gamma; \Pi\Rightarrow\Sigma; \Delta$. However we prefer to use for this aim bisequent calculus (BSC) which was already applied as a uniform basis for arbitrary three-valued logics in \cite{IndrzejczakPetrukhin}. Similar sequent system (but with a slightly different semantic interpretation) was applied also by Fjellstad \cite{Fjellstad1,Fjellstad2}.
Bisequents are ordered pairs of sequents $\Gamma\Rightarrow\Delta \mid \Pi\Rightarrow\Sigma$, where $\Gamma, \Delta, \Pi, \Sigma$ are
finite (possibly empty) multisets of formulae. The correspondence of bisequents to nonstandard sequents from \cite{PavlovicGratzl23} is indicated by using the same metavariables $\Gamma, \Delta, \Pi, \Sigma$ in both cases. $B$ and $S$ are used as metavariables for bisequents and sequents, respectively. 

In both systems of NFL a bisequent $\Gamma\Rightarrow\Delta \mid \Pi\Rightarrow\Sigma$ is axiomatic iff for some atomic formula $\varphi$ (including identities and $Et$), either $\varphi\in\Gamma\cap\Sigma$ or $\varphi\in\Gamma\cap\Delta$ or $\varphi\in\Pi\cap\Sigma$. 
Moreover, bisequents of the form $ \Ex t_1,\ldots,\Ex t_n, \Gamma\Rightarrow\Delta, P(t_1,\ldots,t_n) \mid P(t_1,\ldots,t_n),\Pi\Rightarrow\Sigma$ are also axiomatic.

As follows from \cite{IndrzejczakPetrukhin}, propositional connectives of $ \bf K_3$ are characterised by the rules from Fig. \ref{fig::CalculusK3}. The propositional connectives of $ \bf K_3^w$ are mostly as in $ \bf K_3$, however in four cases we have  three-premiss rules given in Fig. \ref{fig::CalculusK3w}. 
Fig. \ref{fig::CalculusQ} contains the BSC rules for the universal quantifier as well as the existence predicate in the spirit of the rules given in \cite{PavlovicGratzl23}.

\begin{figure}[t!]
	\centering
	\bgroup

\noindent {\footnotesize $(\mathord{\mid\wedge_w}\mathord{\Rightarrow})$ \ $\dfrac{\Gamma
		\Rightarrow \Delta \mid \varphi, \psi, \Pi\Rightarrow\Sigma \qquad \Gamma\Rightarrow\Delta, \varphi \mid \varphi,
		\Pi \Rightarrow \Sigma \qquad \Gamma\Rightarrow\Delta, \psi \mid \psi, \Pi\Rightarrow\Sigma} {\Gamma
		\Rightarrow \Delta \mid \varphi\wedge\psi, \Pi\Rightarrow\Sigma}$}

\1

\noindent {\footnotesize $(\mathord{\mid\Rightarrow}\mathord{\wedge_w})$\ $\dfrac{\Gamma
		\Rightarrow \Delta \mid \Pi\Rightarrow\Sigma, \varphi, \psi \qquad \varphi, \Gamma\Rightarrow\Delta \mid \Pi \Rightarrow \Sigma, \psi \qquad \psi, \Gamma\Rightarrow\Delta \mid \Pi\Rightarrow\Sigma, \varphi} {\Gamma
		\Rightarrow \Delta \mid \Pi\Rightarrow\Sigma, \varphi\wedge\psi}$}

\1

\noindent {\footnotesize $(\mathord{\Rightarrow}\mathord{\rightarrow_w}\mid)$ \ $\dfrac{\Gamma
		\Rightarrow \Delta, \psi \mid \varphi, \Pi\Rightarrow\Sigma \qquad \Gamma\Rightarrow\Delta, \varphi \mid \varphi,
		\Pi \Rightarrow \Sigma \qquad \Gamma\Rightarrow\Delta, \psi \mid \psi, \Pi\Rightarrow\Sigma} {\Gamma
		\Rightarrow \Delta, \varphi\rightarrow\psi \mid \Pi\Rightarrow\Sigma}$}

\1

\noindent {\footnotesize $(\mathord{\rightarrow_w}\mathord{\Rightarrow}\mid)$ \ $\dfrac{\varphi, \psi, \Gamma
		\Rightarrow \Delta \mid \Pi\Rightarrow\Sigma \qquad \Gamma\Rightarrow\Delta \mid \Pi \Rightarrow \Sigma, \varphi, \psi \qquad \psi, \Gamma\Rightarrow\Delta \mid \Pi\Rightarrow\Sigma, \varphi} {\varphi\rightarrow\psi, \Gamma
		\Rightarrow \Delta \mid \Pi\Rightarrow\Sigma}$}

\egroup	
\caption{Specific rules of propositional $ \bf K_3^w$}
\label{fig::CalculusK3w}
\end{figure}

\begin{figure}[t!]
	\centering
	\bgroup			

\begin{flushleft}
$(\mathord{\Rightarrow}\mathord{\forall\mid})$ \, $\dfrac{\Ex a,\Gamma
	\Rightarrow \Delta,\varphi[x/a] \mid S} {\Gamma
		\Rightarrow \Delta,\forall x \varphi \mid S}$\qquad	
$(\mathord{\mid}\mathord{\Rightarrow}\mathord{\forall})$ \, $\dfrac{\Ex a,\Gamma
	\Rightarrow \Delta\mid\Pi
	\Rightarrow \Sigma,\varphi[x/a]} {\Gamma
		\Rightarrow \Delta\mid\Pi
		\Rightarrow \Sigma,\forall x\varphi }$		
\end{flushleft}
\begin{flushleft}
$(\mathord{\forall}\mathord{\Rightarrow}\mathord{\mid})$ \, $\dfrac{\Ex b,\forall x \varphi,\varphi[x/b],\Gamma
	\Rightarrow \Delta \mid S} {\Ex b,\forall x \varphi,\Gamma
		\Rightarrow \Delta \mid S}$\quad
$(\mathord{\mid}\mathord{\forall}\mathord{\Rightarrow})$ \, $\dfrac{\Ex b,\Gamma
	\Rightarrow \Delta\mid\forall x\varphi,\varphi[x/b],\Pi
	\Rightarrow \Sigma} {\Ex b,\Gamma
		\Rightarrow \Delta\mid \forall x\varphi,\Pi
		\Rightarrow \Sigma}$			
\end{flushleft}

\begin{flushleft}
$(\Ex \mathord{\Rightarrow}\mathord{\mid})$ \, $\dfrac{\Ex t,P[t],\Gamma
	\Rightarrow \Delta \mid S} {P[t],\Gamma
		\Rightarrow \Delta \mid S}$\qquad
$(\mathord{\mid}\mathord{\Rightarrow}\Ex)$ \, $\dfrac{\Ex t,\Gamma
		\Rightarrow \Delta\mid\Pi
	\Rightarrow \Sigma,P[t]} {\Gamma
			\Rightarrow \Delta\mid\Pi
		\Rightarrow \Sigma,P[t]}$			
\end{flushleft}
\begin{flushleft}
$(\mathord{\mid}\Ex\mathord{\Rightarrow})$ \, $\dfrac{\Ex t_1,\ldots,\Ex t_n,P(t_1,\ldots,t_n),\Gamma
	\Rightarrow \Delta \mid \Pi\Rightarrow\Sigma} {\Ex t_1,\ldots,\Ex t_n,\Gamma
		\Rightarrow \Delta \mid P(t_1,\ldots,t_n),\Pi\Rightarrow\Sigma}$\qquad $ (\Ex Tr_1) $ $ \dfrac{\Ex t,\Gamma
		\Rightarrow \Delta \mid \Pi\Rightarrow\Sigma}{\Gamma
				\Rightarrow \Delta \mid \Ex t,\Pi\Rightarrow\Sigma} $\qquad
\end{flushleft}
\begin{flushleft}		
$(\mathord{\Rightarrow}\Ex\mathord{\mid})$ \, $\dfrac{\Ex t_1,\ldots,\Ex t_n,\Gamma
	\Rightarrow \Delta \mid \Pi\Rightarrow\Sigma,P(t_1,\ldots,t_n)} {\Ex t_1,\ldots,\Ex t_n,\Gamma
		\Rightarrow \Delta,P(t_1,\ldots,t_n) \mid \Pi\Rightarrow\Sigma}$\qquad			$ (\Ex Tr_2) $ $ \dfrac{\Gamma
		\Rightarrow \Delta \mid  \Pi\Rightarrow\Sigma,\Ex t}{\Gamma
				\Rightarrow \Delta,\Ex t \mid \Pi\Rightarrow\Sigma} $	
\end{flushleft}

\begin{flushleft}
where $ a $ is fresh and $b, b_i$ are arbitrary parameters. Both $ P[t] $  and $ P(t_1,\ldots,t_n) $ denote atoms or identities but not $ \Ex t $, moreover identities of the form $b=d$ are excluded since they are governed by rules from figure 5.  In $P[t]$ there is at least one occurrence of $t$ and there may be other terms; in $P(t_1,\ldots,t_n) $ there are no other terms.
\end{flushleft}

\egroup	
\caption{Rules for universal quantifier and existence predicate}
\label{fig::CalculusQ}
\end{figure}

Note that the applications of $(\mathord{\forall}\mathord{\Rightarrow}\mathord{\mid})$ and $(\mathord{\mid}\mathord{\forall}\mathord{\Rightarrow})$ are restricted to parameters as instantiated terms, no DD are allowed as instantiated terms! It saves us from losing the subformula property. 

In negative FL a standard identity is replaced by its weaker variant with restricted reflexivity $Et\rightarrow t=t$. In NFL we treat identity similarly so it behaves like other predicates and this is why the four rules for $E$ from Fig. \ref{fig::CalculusQ} apply also to identities. 

\begin{figure}[t!]
	\centering
	\bgroup 
	\begin{scriptsize}

	\begin{flushleft}
		{$(\mid\mathord{\Rightarrow}=)$ \ $\dfrac{\Gamma
				\Rightarrow \Delta \mid \Pi\Rightarrow\Sigma, t\approx s \qquad \Gamma\Rightarrow\Delta \mid 
				\Pi \Rightarrow \Sigma, A[x/t] \qquad A[x/s], \Gamma\Rightarrow\Delta \mid \Pi\Rightarrow\Sigma} {\Gamma
				\Rightarrow \Delta \mid \Pi\Rightarrow\Sigma}$}
	\end{flushleft}
	
	\begin{flushleft}

		$ (=\mathord{\Rightarrow}\mid) $ $ \dfrac{t=t, Et, \Gamma
			\Rightarrow \Delta \mid \Pi\Rightarrow\Sigma}{Et, \Gamma
			\Rightarrow \Delta \mid \Pi\Rightarrow\Sigma} $\qquad
		$(E\imath\mathord{\Rightarrow}\mid) $ $ \dfrac{a=d, Ed, \Gamma
			\Rightarrow \Delta \mid  \Pi\Rightarrow\Sigma}{Ed, \Gamma
			\Rightarrow \Delta\mid \Pi\Rightarrow\Sigma} $

	\end{flushleft}

\end{scriptsize}	
	where $A[x/t]$ is an atom, or identity or $Et$, $t\approx s$ denotes either $t=s$ or $s=t$, $a$ is a fresh parameter and $d$ is an arbitrary description.
	
	\egroup	
	\caption{Rules for identity}
	\label{fig::CalculusId}
\end{figure}

Notice that the rule $(\mid\mathord{\Rightarrow}=)$, required for proving the Leibniz Principle, is like the rule from \cite{IndrzejczakFreeDD}. We lose unrestricted subformula property but not in the essential way; in fact only terms $s, t$ and atoms $A$ are needed which are already present in the conclusion of this rule. It is of course possible to use other rules, with smaller branching factor and satisfying the subformula property, to obtain the same effect. However, in the presence of rules for DD with identities as principal formulas, it leads to serious troubles with proving cut admissibility. On the other hand, this rule avoids the problems and allows us to prove cut admissibility for reasonably small price. It is also possible to prepare a more refined version of the calculus, like in \cite{IndrzejczakRussellDD}, with the separation of rules for different kinds of identities, but this leads to the proliferation of rules and, because of space restrictions, we present here a simpler form of the calculus. A generalisation of quantifier rules to variants admitting DD as instantiated terms is easily provable since $\forall x\varphi, Ed\vdash \varphi[x/d]$ holds.

\begin{figure}[t!]
	\centering
	\bgroup

	\begin{scriptsize}

		\begin{flushleft}
			\begin{tabular}{ll}	
				$(\mathord{\imath \Rightarrow\mid 1})$ \ $\dfrac{\varphi[x/c], c=\imath x\varphi, Ec, \Gamma\Rightarrow\Delta \mid S}{c=\imath x\varphi, Ec, \Gamma\Rightarrow\Delta\mid S}$ &
				$(\mathord{\mid\imath\Rightarrow 1})$ \ $\dfrac{Ec, \Gamma
					\Rightarrow \Delta \mid \varphi[x/c], c=\imath x\varphi, \Pi\Rightarrow\Sigma}{Ec, \Gamma
					\Rightarrow \Delta \mid c=\imath x\varphi, \Pi\Rightarrow\Sigma }$\\[16pt]	
			\end{tabular}
		\end{flushleft}
		\begin{flushleft}
			$(\mathord{\imath \Rightarrow\mid 2})$ \, $\dfrac{c=\imath x\varphi, Eb, Ec, \Gamma
				\Rightarrow \Delta \mid \Pi\Rightarrow\Sigma, \varphi[x/b] \quad b=c, c=\imath x\varphi, Eb, Ec, \Gamma\Rightarrow\Delta \mid
				\Pi \Rightarrow \Sigma } {c=\imath x\varphi, Eb, Ec, \Gamma
				\Rightarrow \Delta \mid \Pi\Rightarrow\Sigma}$
		\end{flushleft}
		
		{\begin{flushleft}
				$(\mathord{\mid\imath\Rightarrow 2})$ \, $\dfrac{Eb, Ec, \Gamma
					\Rightarrow \Delta, \varphi[x/b] \mid c=\imath x\varphi, \Pi\Rightarrow\Sigma \quad Eb, Ec, \Gamma\Rightarrow\Delta \mid b=c, c=\imath x\varphi, 
					\Pi \Rightarrow \Sigma } {Eb, Ec, \Gamma
					\Rightarrow \Delta \mid c=\imath x\varphi, \Pi\Rightarrow\Sigma}$
		\end{flushleft}}
		
		\begin{flushleft}
			$(\mathord{\mid\Rightarrow\imath})$ \, $\dfrac{Ec, \Gamma
				\Rightarrow \Delta \mid \Pi\Rightarrow\Sigma, \varphi[x/c] \quad Ea, Ec,  \varphi[x/a], \Gamma\Rightarrow\Delta \mid
				\Pi \Rightarrow \Sigma, a=c} {Ec, \Gamma
				\Rightarrow \Delta \mid \Pi\Rightarrow\Sigma, c=\imath x\varphi, }$
		\end{flushleft}
		
		\begin{flushleft}
			$(\mathord{\Rightarrow \imath\mid})$ \, $\dfrac{Ec, \Gamma
				\Rightarrow \Delta, \varphi[x/c] \mid \Pi\Rightarrow\Sigma \quad Ea, Ec, \Gamma\Rightarrow\Delta, a=c \mid \varphi[x/a], 
				\Pi \Rightarrow \Sigma } {Ec, \Gamma
				\Rightarrow \Delta, c=\imath x\varphi \mid \Pi\Rightarrow\Sigma}$
		\end{flushleft}
		
	\end{scriptsize}
	
	where $a$ is a fresh parameter.
	\egroup	
	\caption{Rules for DD}
	\label{fig::CalculusDD}
\end{figure}

How BSC relates to the semantics from Section \ref{Logics}? 
Following \cite[p. 331]{IndrzejczakPetrukhin}, we give the subsequent definition. 
\begin{defn}
	$\models\Gamma\Rightarrow\Delta \mid \Pi\Rightarrow\Sigma$ iff 
	every valuation $\mathcal{V}$ satisfies $\Gamma\Rightarrow\Delta \mid \Pi\Rightarrow\Sigma$. The latter holds for $\mathcal{V}$ iff for some $\varphi$: either ($\varphi\in \Gamma$ and $\mathcal{V}(\varphi)\neq 1$) or ($\varphi\in \Delta$ and $\mathcal{V}(\varphi)=1$) or ($\varphi\in \Pi$ and $\mathcal{V}(\varphi)=0$) or ($\varphi\in \Sigma$ and $\mathcal{V}(\varphi)\neq 0$). Clearly $\not \models\Gamma\Rightarrow\Delta \mid \Pi\Rightarrow\Sigma$ iff for some $\mathcal{V}$, all elements of $\Gamma$ are true, all elements of $\Delta$ are either false or undefined, all elements of $\Pi$ are either true or undefined and all elements of $\Sigma$ are false. In this case we say that $\mathcal{V}$ falsifies this sequent.
\end{defn}

As follows from \cite{IndrzejczakPetrukhin}, all axiomatic bisequents are valid  and all the rules for connectives are both sound (validity-preserving) and invertible. The same holds for the rules from Fig. \ref{fig::CalculusQ} as follows from \cite{PavlovicGratzl23}. It may be extended to the new rules from Fig. \ref{fig::CalculusId} and \ref{fig::CalculusDD}, hence it holds:

\begin{theorem}[Soundness]\label{Soundness}
	For any bisequent $ B $, if $ \vdash B$, then $ \models B$.
\end{theorem}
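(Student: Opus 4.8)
The plan is to prove soundness by induction on the height of the derivation of $B$, establishing the stronger property that each rule of the calculus is validity-preserving when read downward. The base case requires checking that all axiomatic bisequents are valid. For the first kind of axiom, where some atomic $\varphi$ lies in $\Gamma\cap\Sigma$, $\Gamma\cap\Delta$, or $\Pi\cap\Sigma$, validity is immediate from the falsification definition: no valuation can simultaneously make $\varphi$ true (as required by its membership in $\Gamma$) and false or undefined (as required by $\Sigma$ or $\Delta$), and similarly for the other two cases. For the second kind of axiom, $Et_1,\ldots,Et_n,\Gamma\Rightarrow\Delta,P(t_1,\ldots,t_n)\mid P(t_1,\ldots,t_n),\Pi\Rightarrow\Sigma$, I would argue that any falsifying valuation must make each $Et_i$ true, hence each $t_i\in\mathcal{I}(\Ex)$, so by the valuation clause for predicates $\mathcal{V}(P(t_1,\ldots,t_n))\in\{0,1\}$; but membership in $\Delta$ forces this value to be non-$1$ and membership in $\Pi$ forces it to be non-$0$, a contradiction.

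For the inductive step, I would verify that each rule preserves validity from premisses to conclusion, equivalently that any valuation falsifying the conclusion falsifies some premiss. As the excerpt already grants that the propositional rules of Figures~\ref{fig::CalculusK3} and~\ref{fig::CalculusK3w}, and the quantifier and existence rules of Figure~\ref{fig::CalculusQ}, are sound (indeed invertible) by the cited results of \cite{IndrzejczakPetrukhin} and \cite{PavlovicGratzl23}, the genuinely new work is confined to the identity rules of Figure~\ref{fig::CalculusId} and the description rules of Figure~\ref{fig::CalculusDD}. For these I would take an arbitrary valuation $\mathcal{V}$ falsifying the conclusion and trace the semantic constraints through the principal formulae. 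The treatment splits according to whether $\mathcal{V}$ is weak or strong, but since the nonclassical clauses differ only for $\wedge$ and $\rightarrow$ and the identity and description rules manipulate atoms, identities, and $E$-formulae, the arguments should be uniform across both valuations.

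The key semantic facts to deploy are exactly those used in Lemma~\ref{LambertAxiomIsValid}: that $\mathcal{I}(=)=Ref\cup Id\subseteq\mathcal{I}(\Ex)\times\mathcal{I}(\Ex)$, so an identity $t=s$ can only take value $1$ or $0$ when both terms denote and is otherwise undefined, together with the defining condition on $\mathcal{I}(\imath x\varphi)$, namely that $\mathcal{I}(\imath x\varphi)=a$ forces $\mathcal{V}^i(\varphi[x/a])=1$ and $\mathcal{V}^i(\varphi[x/b])=0$ for every $a\neq b\in\mathcal{I}(\Ex)$. For instance, in $(\mathord{\imath\Rightarrow\mid 1})$ a falsifying valuation of the conclusion makes $c=\imath x\varphi$ and $Ec$ true, whence $c\in\mathcal{I}(\Ex)$ and $\mathcal{I}(\imath x\varphi)=c$, so $\mathcal{V}(\varphi[x/c])=1$; this matches the falsification of the premiss, which differs only by carrying $\varphi[x/c]$ in the antecedent $\Gamma$. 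For the two-premiss description rules $(\mathord{\mid\Rightarrow\imath})$ and $(\mathord{\Rightarrow\imath\mid})$ I would case-split on whether $\mathcal{I}(\imath x\varphi)$ equals $c$, is some other element, or is undefined, and verify that in each case one of the two premisses is falsified. The rules with freshness side-conditions, such as $(\mathord{\mid\Rightarrow\imath})$ and $(\mathord{\Rightarrow\imath\mid})$ where $a$ is fresh, require the standard move of choosing the witness $a$ to name the relevant element of $\mathcal{I}(\Ex)$ supplied by the semantic condition.

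I expect the main obstacle to lie in the two-premiss description rules $(\mathord{\mid\Rightarrow\imath})$ and $(\mathord{\Rightarrow\imath\mid})$, which introduce $c=\imath x\varphi$ on the right and must therefore encode the full uniqueness-and-existence content of the defining condition on $\mathcal{I}(\imath x\varphi)$. Here a falsifying valuation need not make $c=\imath x\varphi$ take value $0$ outright; it may leave it undefined, which under $D=\{1\}$ still counts as falsifying a right-hand $\Sigma$-occurrence, so the case analysis on the three possible semantic statuses of $\imath x\varphi$ relative to $c$ must be carried out with care to confirm that the witness premiss with fresh $a$ is genuinely falsified precisely when $\mathcal{I}(\imath x\varphi)$ is defined but distinct from $c$. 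Managing the interaction between the truth-value gaps and the designated-value convention in this branch is where the argument is least mechanical.
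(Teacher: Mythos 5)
Your overall strategy coincides with the paper's: induction on the height of the derivation, rule-by-rule verification of validity preservation, with the axiomatic bisequents checked directly, the propositional, quantifier and existence rules delegated to \cite{IndrzejczakPetrukhin} and \cite{PavlovicGratzl23}, and the new work concentrated on the rules of Fig.~\ref{fig::CalculusId} and Fig.~\ref{fig::CalculusDD}, analysed semantically via $\mathcal{I}(=)\subseteq\mathcal{I}(\Ex)\times\mathcal{I}(\Ex)$, the defining condition on $\mathcal{I}(\imath x\varphi)$, and the fresh parameter naming the semantic witness. The paper does exactly this, working out $(\mathord{\mid\Rightarrow\imath})$ in full detail and leaving the other rules as similar.

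There is, however, one concrete error, and it sits precisely in the case you yourself flag as the hardest. You claim that a falsifying valuation ``need not make $c=\imath x\varphi$ take value $0$ outright; it may leave it undefined, which \ldots still counts as falsifying a right-hand $\Sigma$-occurrence.'' Under the paper's definition of falsification this is backwards: elements of $\Sigma$ (succedent of the right sequent) must be exactly false (value $0$), while it is elements of $\Delta$ (succedent of the left sequent) that may be false \emph{or} undefined. (You state these clauses correctly in your base case, so this is an internal inconsistency.) The distinction carries real weight. In $(\mathord{\mid\Rightarrow\imath})$ the new identity sits in $\Sigma$, so falsifying the conclusion forces $\val(c=\imath x\varphi)=0$; together with the truth of $Ec$ this forces $\imath x\varphi$ to denote something distinct from $c$ --- this is exactly the paper's step ``by the properties of $\mathcal{I}(=)$, we have $c,\imath x\varphi\in\mathcal{I}(\Ex)$'' --- after which the defining condition on $\mathcal{I}(\imath x\varphi)$ closes the case. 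If value $\half$ were allowed to falsify a $\Sigma$-occurrence, then the status ``$\mathcal{I}(\imath x\varphi)$ undefined'' would falsify the conclusion while falsifying neither premiss: the first premiss requires $\val(\varphi[x/c])=0$ (again a $\Sigma$-occurrence), and the second requires $\val(\varphi[x/a])=1$ and $\val(a=c)=0$, none of which follows from mere undefinedness of the description. So, carried out as written, your verification of $(\mathord{\mid\Rightarrow\imath})$ stalls at this case. The undefined status genuinely arises only for $(\mathord{\Rightarrow\imath\mid})$, where $c=\imath x\varphi$ is introduced in $\Delta$, and there it must (and can) be handled. Once the falsification clauses for $\Delta$ and $\Sigma$ are put right, your plan goes through and is essentially the paper's argument.
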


\begin{proof}
By induction of the height of the derivation, using the fact that all the rules are sound. As an example, let us illustrate soundness of the rule  $(\mathord{\mid\Rightarrow\imath})$. The proof holds both for $ \bf K_3$ and $ \bf K_3^w$. Recall that $ a $ is fresh. 
\begin{enumerate}[$ (1) $]\itemsep=0pt
\item Suppose that $ \models Ec,\Gamma
	\Rightarrow \Delta \mid \Pi\Rightarrow\Sigma, \varphi[x/c]$.
\item Thus, for any valuation $ \val $, $ \val(Ec)\not=1 $, or for some $ \gamma\in\Gamma $, $ \val(\gamma)\not=1 $, or for some $ \delta\in\Delta $, $ \val(\delta)=1$, or for some $ \pi\in\Pi $, $ \val(\pi)=0 $, or for some $ \sigma\in\Sigma $, $ \val(\sigma)\not=0 $, or $ \val(\varphi[x/c])\not=0 $.
\item Suppose that $ \models Ea, \varphi[x/a],Ec, \Gamma\Rightarrow\Delta \mid
	\Pi \Rightarrow \Sigma, a=c$. 
\item Hence, for any valuation $ \val $, $ \val(Ea)\not=1 $, or $ \val(\varphi[x/a])\not=1 $, or $ \val(Ec)\not=1 $, or for some $ \gamma\in\Gamma $, $ \val(\gamma)\not=1 $, or for some $ \delta\in\Delta $, $ \val(\delta)=1$, or for some $ \pi\in\Pi $, $ \val(\pi)=0 $, or for some $ \sigma\in\Sigma $, $ \val(\sigma)\not=0 $, or $ \val(a=c)\not=0 $.
\item $ \not\models Ec,\Gamma
	\Rightarrow \Delta \mid \Pi\Rightarrow\Sigma, c=\imath x\varphi$.
\item Therefore, there is a valuation $ \val $ such that $ \val(Ec)=1 $, for any $ \gamma\in\Gamma $, $ \val(\gamma)=1 $, for any $ \delta\in\Delta $, $ \val(\delta)\not=1 $, for any $ \pi\in\Pi $, $ \val(\pi)\not=0 $, and for any $ \sigma\in\Sigma $, $ \val(\sigma)=0 $, $ \val(c=\imath x\varphi)=0 $.
\item Then we obtain:	\begin{enumerate}\itemsep=0pt
\item $ \val(\varphi[x/c])\not=0 $,
\item $ \val(Ea)\not=1 $, or $ \val(\varphi[x/a])\not=1 $, or $ \val(a=c)\not=0 $,
\item $ \val(Ec)=1 $, $ \val(c=\imath x\varphi)=0 $.
\end{enumerate}
\item By the properties of $ \mathcal{I}(=) $, we have $ c,\imath x\varphi \in\mathcal{I}(E)$.
\item $ \mathcal{I}(\imath x\varphi)\not=c $, that is $ \val(\varphi[x/c])\not=1 $ or $ \val(\varphi[x/b])\not=0 $, for some $ c\not=b\in\mathcal{I}(E) $.
\item Since $ a $ is fresh, $ \val(\varphi[x/c])\not=1 $ or \big($ \val(\varphi[x/a])\not=0 $ and $ c\not=a\in\mathcal{I}(E) $\big).
\item Suppose that $ \val(\varphi[x/c])\not=1 $. Thus, $ \val(\varphi[x/c])=\half $, since also $ \val(\varphi[x/c])\not=0 $. Since $ c\in\mathcal{I}(E) $, $ \val(\varphi[x/c])\not=\half $. Contradiction.
\item Suppose that $ \val(\varphi[x/a])\not=0 $ and $ c\not=a\in\mathcal{I}(E) $.
\item Thus, $ \val(Ea)=1 $ and $\val(a=c)\not=1$.
\item Hence, $ \val(\varphi[x/a])\not=1 $ or $ \val(a=c)\not=0 $.
\item If $ \val(\varphi[x/a])\not=1 $, then $ \val(\varphi[x/a])=\half $, since $ \val(\varphi[x/a])\not=0 $. Since $ a\in\mathcal{I}(E) $, $ \val(\varphi[x/a])\not=\half $. Contradiction.
\item If $ \val(a=c)\not=0 $, then $ \val(a=c)=\half $, since $ \val(a=c)\not=1 $. However, $ \val(a=c)\not=\half $, because $ a,c\in\mathcal{I}(E) $. Contradiction. 
\item Contradiction. Thus, $ \models Ec,\Gamma
	\Rightarrow \Delta \mid \Pi\Rightarrow\Sigma, c=\imath x\varphi$.
\end{enumerate}

\end{proof}

In fact, for the NFL without DD and identity, as defined by rules from Fig. \ref{fig::CalculusK3}, \ref{fig::CalculusK3w}, \ref{fig::CalculusQ}, also completeness was proved in \cite{PavlovicGratzl23}, and it may be extended in a straightforward way to cover identity, as was done for negative FL in \cite{PavlovicGratzl21}. For DD we already demonstrated that \eqref{L} is valid, so we restrict our considerations of adequacy to show that in (complete) BSC for NFL without DD we can prove interderivability of our rules for DD with both forms of \eqref{L}.

To do that we must restrict our general notion of provability in BSC. 
Although we defined satisifability and validity for arbitrary bisequents, in fact we are interested in the narrower notion of BSC proof, related to definition of consequence relation, as defined in Section \ref{Logics}. 
The definition of validity for bisequents shows that the entailment between $\Gamma$ and $\varphi$ is expressed in BSC as provability of $\Gamma\Rightarrow\varphi \mid\; \Rightarrow$. Moreover, although in propositional Kleene's logic the set of validities is empty it is not so in the first-order NFL based on it. In particular, both forms of \eqref{L} are valid and will be shown provable.

To save space in the proofs we will usually omit repetitions of formulae which are still active in premisses, according to rule schema, but which are not essential for the proof in question. The following three results hold:

\begin{lemma}[Substitution]\label{Substitution}
	If $\vdash_n\Gamma\Rightarrow\Delta \mid \Theta\Rightarrow\Lambda$ is derivable 
	(where $ n $ denotes derivability with height bounded by $ n $), then the sequent
	$\vdash_n\Gamma[b/c]\Rightarrow\Delta[b/c]\mid \Theta[b/c]\Rightarrow\Lambda[b/c]$ is likewise derivable.	
\end{lemma}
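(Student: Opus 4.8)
The plan is to prove the Substitution Lemma by induction on the height $n$ of the derivation of $\Gamma\Rightarrow\Delta \mid \Theta\Rightarrow\Lambda$. The substitution $[b/c]$ replaces every occurrence of the parameter $b$ by the parameter $c$ throughout all four multisets of the bisequent. Since the bound height is preserved, the induction must show that each rule application in the original derivation can be mirrored by a rule application of the same height after substitution.

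For the base case, I would check that the substitution instance of an axiomatic bisequent is again axiomatic. This is immediate: the three axiom schemes are stated in terms of shared atomic formulae (identities, $Et$, or a predicate $P(t_1,\dots,t_n)$ appearing in the appropriate positions), and uniformly applying $[b/c]$ to the whole bisequent preserves these sharing patterns, since substitution commutes with membership in the multisets and with atom formation.

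For the inductive step I would argue rule by rule. The purely propositional rules of Figure~\ref{fig::CalculusK3} and the weak-Kleene rules of Figure~\ref{fig::CalculusK3w} pose no difficulty, since substitution commutes with the connectives: one applies the induction hypothesis to the premiss(es) and then the same rule, and the principal/side formulae transform correctly. The same is true for the identity and DD rules of Figures~\ref{fig::CalculusId} and~\ref{fig::CalculusDD}, provided one is careful with the eigenvariable side-conditions discussed below.

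The main obstacle will be the rules carrying a freshness condition on an eigenparameter, namely $(\mathord{\Rightarrow}\mathord{\forall\mid})$, $(\mathord{\mid}\mathord{\Rightarrow}\mathord{\forall})$, $(\mathord{\mid\Rightarrow}=)$, $(E\imath\mathord{\Rightarrow}\mid)$, and the DD-rules with fresh $a$. Here the standard subtlety is a clash between the eigenparameter $a$ of the rule and the parameters $b,c$ of the substitution. If $a$ is distinct from both $b$ and $c$, the induction hypothesis applies to the premiss and the same rule reinstates the conclusion, with $a$ still fresh relative to the substituted conclusion. If $a$ coincides with $b$ or $c$, I would first rename the eigenparameter in the original derivation to some genuinely fresh parameter $a'$ — which is legitimate precisely by a (simpler) application of the same Substitution Lemma at lower height, i.e.\ by the induction hypothesis applied to $[a/a']$ — and only then perform $[b/c]$. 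This renaming step is routine but is exactly where one must verify that the height bound is not increased and that all side-conditions remain satisfied; it is the crux of the argument.
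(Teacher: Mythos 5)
Your proof is correct and takes essentially the same route as the paper, whose own proof simply defers to the standard argument in Pavlovi\'{c} and Gratzl: induction on derivation height (restricted to parameters), with renaming of eigenparameters via the induction hypothesis at lower height when they clash with $b$ or $c$. The only trivial slip is listing $(\mid\mathord{\Rightarrow}=)$ among the rules carrying a freshness condition --- the fresh $a$ in Figure 5 belongs to $(E\imath\mathord{\Rightarrow}\mid)$, not to $(\mid\mathord{\Rightarrow}=)$ --- but this over-caution is harmless.
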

\begin{proof} Similar to the proof in \cite{PavlovicGratzl23}. Note that it is restricted to parameters. 
\end{proof}

\begin{lemma}
	
	$ \vdash\Ex t_1,\ldots,\Ex t_n, \Gamma\Rightarrow\Delta, \varphi(t_1,\ldots,t_n) \mid \varphi(t_1,\ldots,t_n),\Pi\Rightarrow\Sigma$, where $t_1, \ldots, t_n$ are all terms occuring in $\varphi$
	
\end{lemma}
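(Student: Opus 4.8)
The plan is to prove the lemma by induction on the complexity of $\varphi$, carried out simultaneously for \emph{arbitrary} contexts $\Gamma,\Delta,\Pi,\Sigma$. Keeping the contexts arbitrary is what lets me absorb all surplus material into them in the inductive step, so no separate weakening lemma is needed: the induction hypothesis, being available for any contexts, already delivers any weakened instance. (Write $\Ex\vec t$ for the block $\Ex t_1,\dots,\Ex t_n$.) Semantically the statement merely records that once every term of $\varphi$ is guaranteed to denote by the antecedent atoms $\Ex t_i$, the formula $\varphi$ can no longer take the gap value $\half$, and so cannot sit simultaneously in the ``$\neq 1$'' slot $\Delta$ and the ``$\neq 0$'' slot $\Pi$; the induction is the proof-theoretic counterpart of this fact.

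For the base case $\varphi$ is atomic (of any internal complexity, since descriptions inside $\varphi$ count as atomic terms here). If $\varphi=\Ex t$, the bisequent has $\Ex t\in\Gamma\cap\Delta$ and is axiomatic by the first axiom schema. If $\varphi$ is a genuine predicate atom or an identity $t\approx s$, the bisequent has exactly the shape $\Ex t_1,\dots,\Ex t_n,\Gamma\Rightarrow\Delta,P(t_1,\dots,t_n)\mid P(t_1,\dots,t_n),\Pi\Rightarrow\Sigma$ and is axiomatic by the second schema.

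For the inductive step I would decompose the two displayed occurrences of $\varphi$ — the one in the succedent of the first sequent and the one in the antecedent of the second — using the rules whose principal formula occupies precisely those positions, and then close every leaf by the hypothesis on an immediate subformula. For $\neg\psi$, one application of $(\Rightarrow\neg\mid)$ followed by one of $(\mid\neg\Rightarrow)$ reduces the goal to the hypothesis for $\psi$. For $\psi\wedge\chi$ and $\psi\rightarrow\chi$ in $\mathbf{K_3}$ one applies the matching two-premiss rules $(\Rightarrow\wedge\mid),(\mid\wedge\Rightarrow)$ resp.\ $(\Rightarrow\rightarrow\mid),(\mid\rightarrow\Rightarrow)$; each resulting leaf places a subformula in the succedent of the first sequent together with the same subformula in the antecedent of the second, and is discharged by the hypothesis on $\psi$ or on $\chi$, the other subformula and the surplus $\Ex t_i$ being folded into the contexts. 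In $\mathbf{K_3^w}$ the relevant rules $(\mid\wedge_w\Rightarrow)$ and $(\Rightarrow\rightarrow_w\mid)$ are three-premiss: one first applies the weak rule to one occurrence of $\varphi$, and then in each of the three premisses removes the remaining occurrence with the ordinary rule for the opposite position. A routine check shows that every one of the leaves again juxtaposes some subformula with itself across the bar, so all are closed by the hypothesis on $\psi$ or $\chi$; the ``mixed'' premisses of the weak rules are exactly what makes this go through.

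The quantifier case is the one real obstacle, because of the interaction between the freshness condition of the right rule and the existence guard of the left rule. For $\varphi=\forall x\psi$ the key is to apply the right rule $(\Rightarrow\forall\mid)$ \emph{first}, introducing a fresh parameter $a$ together with $\Ex a$, which yields $\Ex a,\Ex\vec t,\Gamma\Rightarrow\Delta,\psi[x/a]\mid\forall x\psi,\Pi\Rightarrow\Sigma$. Now $\Ex a$ is present, so the left rule $(\mid\forall\Rightarrow)$ may be used with instantiating parameter $b:=a$, producing $\psi[x/a]$ in the second antecedent. The resulting leaf $\Ex a,\Ex\vec t,\Gamma\Rightarrow\Delta,\psi[x/a]\mid\psi[x/a],\forall x\psi,\Pi\Rightarrow\Sigma$ carries $\psi[x/a]$ on both sides of the bar; since its complexity is strictly smaller than that of $\forall x\psi$ and all its terms lie among $\vec t\cup\{a\}$ (each existence-guarded), it is closed by the induction hypothesis, with the leftover $\forall x\psi$ absorbed as context. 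This ordering is essential: instantiating the left occurrence first would force the use of some term already in $\vec t$ rather than the fresh witness demanded by the right rule, and the two occurrences would fail to match.
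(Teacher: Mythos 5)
Your proposal is correct and takes essentially the same route as the paper: the paper's proof is exactly an induction on the complexity of $\varphi$ (stated there without details), and your elaboration — closing atomic cases by the two axiom schemata, decomposing the two occurrences of $\varphi$ across the bar with the matching rules (including the three-premiss weak-Kleene rules), and applying $(\mathord{\Rightarrow}\mathord{\forall\mid})$ before $(\mathord{\mid}\mathord{\forall}\mathord{\Rightarrow})$ so the fresh parameter $a$ with its guard $\Ex a$ can serve as the instantiating term — fills in that induction correctly.
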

\begin{proof} By induction on the complexity of $\varphi$. 	
\end{proof}

\begin{lemma}[Leibniz Law]
	For any formula $ \varphi $, it holds that 	$\vdash t_1 = t_2, \varphi[x/t_1] \Rightarrow \varphi[x/t_2] \mid S$.
\end{lemma}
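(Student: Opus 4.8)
The plan is to prove the Leibniz Law $\vdash t_1 = t_2, \varphi[x/t_1] \Rightarrow \varphi[x/t_2] \mid S$ by induction on the complexity of $\varphi$. The backbone of the whole argument is the special rule $(\mid\mathord{\Rightarrow}=)$ from Figure \ref{fig::CalculusId}, which was explicitly designed (as the authors remark) ``for proving the Leibniz Principle''. That rule lets one replace $s$ by $t$ inside an \emph{atomic} context $A$, provided one can derive the identity premiss and discharge the two substitution instances. So the real content lies entirely in the base case; the inductive cases should follow by routine unravelling of the connective and quantifier rules, using the height-preserving Substitution Lemma (Lemma \ref{Substitution}) where fresh parameters must be renamed.

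\emph{Base case.} Here $\varphi$ is atomic — an atom $P(\ldots)$, an identity, or $Et$. I would first handle the sub-case where $x$ actually occurs in $\varphi$, so that $\varphi[x/t_1]$ and $\varphi[x/t_2]$ genuinely differ. Writing $A := \varphi$ with the displayed occurrence of $x$, I apply $(\mid\mathord{\Rightarrow}=)$ with the atom $A$, reading $t\mapsto t_1$, $s\mapsto t_2$. This reduces the goal to three premisses: (i) $\Rightarrow t_1 \approx t_2$ in the right component, which is discharged because $t_1=t_2$ sits in the antecedent and $t\approx s$ covers both orientations; (ii) $\Rightarrow A[x/t_1]$, discharged against $\varphi[x/t_1]$ in the antecedent of the conclusion via an axiomatic bisequent; and (iii) $A[x/t_2] \Rightarrow$, which closes against the succedent $\varphi[x/t_2]$. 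Each premiss is axiomatic (recall that identities and $Et$ count as atomic for axioms), so no value-gap bookkeeping intrudes. The degenerate sub-case where $x$ does not occur in $\varphi$ is immediate, since then $\varphi[x/t_1]$ and $\varphi[x/t_2]$ are literally the same formula and the bisequent is axiomatic.

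\emph{Inductive step.} For $\neg\psi$, $\psi\wedge\chi$, and $\psi\rightarrow\chi$ I peel off the outermost connective using its left and right BSC rules on both the antecedent occurrence $\varphi[x/t_1]$ and the succedent occurrence $\varphi[x/t_2]$, reducing to the induction hypothesis on the immediate subformulae (keeping the side identity $t_1=t_2$ present throughout). The only delicate point is $\bf K_3^w$: the weak conjunction and implication rules are three-premiss, so each reduction spawns extra premisses in which a subformula migrates between components, but in every such premiss the relevant atom or subformula still matches across the two sides, and the induction hypothesis plus invertibility close them. For $\forall y\,\psi$ (with $y\neq x$) I apply $(\mathord{\Rightarrow}\forall\mid)$ introducing a fresh parameter $a$ on the right, then use $(\mathord{\forall}\Rightarrow\mid)$ to instantiate the antecedent $\forall y\,\psi[x/t_1]$ at the same $a$ — legitimate since $a$ is a parameter, not a DD — and invoke the induction hypothesis on $\psi$; the freshness conditions are maintained by Lemma \ref{Substitution}.

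\emph{Main obstacle.} I expect the genuine difficulty to be the interaction with definite descriptions, not the connectives. The substitution $\varphi[x/t_i]$ may introduce terms $t_i$ that are themselves descriptions $\imath z\chi$, yet the quantifier rules forbid instantiating with DD and the axioms and $(\mid\mathord{\Rightarrow}=)$ are stated for atoms. The subformula-property caveat attached to $(\mid\mathord{\Rightarrow}=)$ is what rescues this: because that rule only ever manipulates the terms $s,t$ and the atom $A$ already present in the conclusion, replacement of a description inside an atom is handled directly at the atomic level without ever needing to quantify over or decompose the description. One must, however, check that the existence presuppositions line up — a replacement $t_1\mapsto t_2$ inside a predicate is sound only under the structure's congruence clause $\langle s,t\rangle\in\mathcal{I}(=)\Rightarrow(\ldots\in\mathcal{I}(P^n)\Leftrightarrow\ldots)$, and this is exactly the semantic fact that validates $(\mid\mathord{\Rightarrow}=)$, so proof-theoretically nothing further is required beyond a careful case analysis confirming that each atomic shape ($P(\ldots)$, $s'=s''$, $Et$) is covered by an instance of that rule.
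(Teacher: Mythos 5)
Your overall plan coincides with the paper's (the paper's entire proof is the one-line remark ``by induction on the complexity of $\varphi$''), and your base case is exactly right: with $\Gamma=\{t_1=t_2,\varphi[x/t_1]\}$, $\Delta=\{\varphi[x/t_2]\}$ the three premisses of $(\mid\mathord{\Rightarrow}=)$ are all axiomatic, the $\approx$-notation taking care of the orientation of the identity. The genuine gap is in your inductive step. You claim that peeling off $\neg$, $\wedge$, $\rightarrow$ ``reduces to the induction hypothesis on the immediate subformulae'', but in BSC the rules for $\neg$ and $\rightarrow$ do \emph{not} stay inside the left sequent: $(\mathord{\Rightarrow}\mathord{\neg\mid})$ trades $\neg\psi$ in the left succedent for $\psi$ in the \emph{right antecedent}, and $(\mathord{\neg}\mathord{\Rightarrow\mid})$ trades $\neg\psi$ in the left antecedent for $\psi$ in the \emph{right succedent}. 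Concretely, for $\varphi=\neg\psi$ your goal $t_1=t_2,\neg\psi[x/t_1]\Rightarrow\neg\psi[x/t_2]\mid\Pi\Rightarrow\Sigma$ reduces to $t_1=t_2\Rightarrow\;\mid\psi[x/t_2],\Pi\Rightarrow\Sigma,\psi[x/t_1]$, in which the subformulae sit in the right component \emph{and} the roles of $t_1$ and $t_2$ are interchanged; this is not an instance of the lemma as stated, so the induction hypothesis cannot be applied. The same cross-component shift occurs with the $\varphi$-premiss of $(\mathord{\Rightarrow}\mathord{\rightarrow}\mid)$ and $(\mathord{\rightarrow}\mathord{\Rightarrow}\mid)$. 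The repair is standard but must be said: strengthen the statement and prove simultaneously, for arbitrary contexts, the four variants $t_1=t_2,\varphi[x/t_i],\Gamma\Rightarrow\Delta,\varphi[x/t_j]\mid\Pi\Rightarrow\Sigma$ and $t_1=t_2,\Gamma\Rightarrow\Delta\mid\varphi[x/t_i],\Pi\Rightarrow\Sigma,\varphi[x/t_j]$ with $\{i,j\}=\{1,2\}$; the negation and implication cases of the left variants reduce to right variants with $i,j$ swapped and vice versa, while all four base cases remain instances of $(\mid\mathord{\Rightarrow}=)$.

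A second, smaller inaccuracy concerns $\bf K_3^w$: the extra premisses of the three-premiss rules are \emph{not} closed merely because ``the relevant subformula matches across the two sides''. A bisequent of the shape $\Gamma\Rightarrow\Delta,\theta\mid\theta,\Pi\Rightarrow\Sigma$ is in general underivable (it is falsified by any valuation giving $\theta$ the value $\half$); the paper's generalised axiom of this shape requires $E s$ in the left antecedent for every term $s$ of $\theta$, and those $E$-atoms are not available here. These premisses must instead be handled by further decomposing the compound formula $\varphi[x/t_1]$ still present in the left antecedent and then invoking the right-component variants of the strengthened induction hypothesis above. With these two amendments your argument goes through and is, in substance, the proof the paper leaves implicit.
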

\begin{proof} By induction on the complexity of $\varphi$. 
\end{proof}

The proof of \eqref{L2} in strong Kleene logic is as follows, where ($ E $) stands for provable $Ea\Rightarrow \varphi[x/a] \mid \varphi[x/a]\Rightarrow$:

{\normalsize \begin{prooftree}
		\EnableBpAbbreviations	
		\AXC{($ E $)}
		\AxiomC{$Eb\Rightarrow \varphi[x/b] \mid \varphi[x/b]\Rightarrow$}
		\RL{$(\mid\rightarrow\Rightarrow)$}
		\AxiomC{$Ea, Eb\Rightarrow a=b \mid a=b \Rightarrow$}
		\BinaryInfC{$Ea, Eb\Rightarrow a=b\mid \varphi[x/b], \varphi[x/b]\rightarrow a=b\Rightarrow$}
		\RL{$(\mid\forall\Rightarrow)$}
		\UnaryInfC{$Ea, Eb\Rightarrow a=b\mid \varphi[x/b], \forall x(\varphi\rightarrow a=x)\Rightarrow$}
		\RightLabel{$(\Rightarrow\imath\mid)$}
		\BinaryInfC{$Ea\Rightarrow a=\imath x\varphi\mid\varphi[x/a], \forall x(\varphi\rightarrow a=x)\Rightarrow$}
		\RightLabel{$(\mid\wedge\Rightarrow)$}
		\UnaryInfC{$Ea\Rightarrow a=\imath x\varphi\mid\varphi[x/a]\wedge\forall x(\varphi\rightarrow a=x)\Rightarrow$}
		\RightLabel{$(\Rightarrow\rightarrow\mid)$}
		\UnaryInfC{$Ea\Rightarrow \varphi[x/a]\wedge\forall x(\varphi\rightarrow a=x)\rightarrow a=\imath x\varphi\mid \Rightarrow$}
		\RightLabel{$(\Rightarrow\forall\mid)$}
		\UnaryInfC{$\Rightarrow \forall y(\varphi[x/y]\wedge\forall x(\varphi\rightarrow y=x)\rightarrow y=\imath x\varphi)\mid \Rightarrow$}
\end{prooftree}}

For \eqref{L1} let ($ E^\prime $) stand for the following proof:
\begin{prooftree}
	
	\EnableBpAbbreviations
	\AxiomC{$Ea\Rightarrow \varphi[x/a] \mid \varphi[x/a]\Rightarrow$}
	\RL{$(\mathord{\mid\imath\Rightarrow 1})$}
	\UnaryInfC{$Ea\Rightarrow \varphi[x/a] \mid a=\imath x\varphi\Rightarrow$}
\end{prooftree}

Then:
{\normalsize \begin{prooftree}
		\EnableBpAbbreviations
		\AXC{($ E^\prime $)}		
		\AxiomC{$Eb\Rightarrow \varphi[x/b] \mid \varphi[x/b]\Rightarrow$}
		\RL{$(\mathord{\mid\imath\Rightarrow 2})$}
		\AxiomC{$Ea, Eb\Rightarrow a=b \mid a=b \Rightarrow$}
		\BinaryInfC{$Ea, Eb\Rightarrow a=b \mid \varphi[x/b], a=\imath x\varphi\Rightarrow$}
		\RL{$(\mathord{\Rightarrow\rightarrow\mid})$}
		\UnaryInfC{$Ea, Eb\Rightarrow \varphi[x/b]\rightarrow a=b \mid a=\imath x\varphi\Rightarrow$}
		\RL{$(\mathord{\Rightarrow\forall\mid})$}
		\UnaryInfC{$Ea\Rightarrow\forall x(\varphi\rightarrow a=x)\mid a=\imath x\varphi \Rightarrow$}
		\RL{$(\mathord{\Rightarrow\wedge\mid})$}
		\BinaryInfC{$Ea\Rightarrow \varphi[x/a]\wedge\forall x(\varphi\rightarrow a=x)\mid a=\imath x\varphi\Rightarrow$}
		\RL{$(\mathord{\Rightarrow\rightarrow\mid})$}
		\UnaryInfC{$Ea\Rightarrow a=\imath x\varphi\rightarrow\varphi[x/a]\wedge\forall x(\varphi\rightarrow a=x)\mid \Rightarrow$}
		\RL{$(\mathord{\Rightarrow\forall\mid})$}
		\UnaryInfC{$\Rightarrow \forall y(y=\imath x\varphi\rightarrow\varphi[x/y]\wedge\forall x(\varphi\rightarrow y=x))\mid \Rightarrow$}
\end{prooftree}}

Proofs in BSC for weak Kleene's logic are more complex since the applications of $(\Rightarrow\rightarrow\mid)$ and $(\mid\wedge\Rightarrow)$ introduce additional premisses. However, in all respective premisses we obtain simply sequents of the form $Ea, \Gamma \Rightarrow \Delta,\psi(a)\mid\psi(a),\Pi\Rightarrow\Sigma$ or $Ea, Eb,\Gamma \Rightarrow \Delta,\psi(a,b)\mid\psi(a,b),\Pi\Rightarrow\Sigma$ which are provable.

All rules for DD are derivable if we use \eqref{L1} or \eqref{L2} 
 as additional axioms and use cuts which will be proved admissible in the next section. Here is an example. From both premisses of $(\Rightarrow\imath\mid)$ we obtain:
 \begin{footnotesize}
 	\begin{prooftree}
 		\AxiomC{$Ec, \Gamma\Rightarrow\Delta, \varphi[x/c]\mid \Pi\Rightarrow\Sigma$}
 		\AxiomC{$Ec, Ea, \Gamma\Rightarrow\Delta, c=a\mid \varphi[x/a], \Pi\Rightarrow\Sigma$}
 		\RightLabel{$(\Rightarrow\rightarrow\mid)$}
 		\UnaryInfC{$Ec, Ea, \Gamma\Rightarrow \Delta, \varphi[x/a]\rightarrow c=a\mid \Pi\Rightarrow\Sigma$}
 		\RightLabel{$(\Rightarrow\forall\mid)$}
 		\UnaryInfC{$Ec, \Gamma\Rightarrow \Delta, \forall x(\varphi\rightarrow c=x)\mid \Pi\Rightarrow\Sigma$}
 		\RightLabel{$(\mid\Rightarrow \wedge)$}
 		\BinaryInfC{$Ec, \Gamma\Rightarrow \Delta, \varphi[x/c]\wedge\forall x(\varphi\rightarrow c=x)\mid \Pi\Rightarrow\Sigma$}
 	\end{prooftree}	
 \end{footnotesize}		
 
 which by cut with $Ec, \varphi[x/c]\wedge\forall x(\varphi\rightarrow c=x)\Rightarrow c=\imath x\varphi \mid \Rightarrow$ yields the conclusion  of $(\Rightarrow\imath\mid)$. The latter bisequent is proved by cuts on the axiomatic sequent $(L^\leftarrow)$ i.e. $\Rightarrow \forall y(\varphi[x/y]\wedge\forall x(\varphi\rightarrow y=x)\rightarrow y=\imath x\varphi)\mid \Rightarrow$ with $Ec, \forall y(\varphi[x/y]\wedge\forall x(\varphi\rightarrow y=x)\rightarrow y=\imath x\varphi)\Rightarrow \varphi[x/c]\wedge\forall x(\varphi\rightarrow c=x)\rightarrow c=\imath x\varphi\mid \Rightarrow$ and $\varphi[x/c]\wedge\forall x(\varphi\rightarrow c=x)\rightarrow c=\imath x\varphi), \varphi[x/c]\wedge\forall x(\varphi\rightarrow c=x)\Rightarrow c=\imath x\varphi\mid \Rightarrow$ which are easily provable.

\section{Cut Admissibility}\label{Cut}

In order to show that BSC is cut-free we need to prove the following results:

\begin{lemma}[Generalisation of axioms]
For any formula $ \varphi $, the following bisequents are derivable: \begin{enumerate}[$(1)$]\itemsep=0pt
\item $ \varphi,\Gamma\Rightarrow\Delta\mid\Pi\Rightarrow\Sigma,\varphi $,
\item $ \varphi,\Gamma\Rightarrow\Delta,\varphi\mid\Pi\Rightarrow\Sigma $, 
\item $ \Gamma\Rightarrow\Delta\mid\varphi,\Pi\Rightarrow\Sigma,\varphi $.
\item $ \Ex t_1, \hdots, \Ex t_n, \Gamma\Rightarrow\Delta, \varphi[t_1, \hdots, t_n]\mid\varphi[t_1, \hdots, t_n],\Pi\Rightarrow\Sigma$.
\end{enumerate}
\end{lemma}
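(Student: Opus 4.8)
The plan is to prove all four items simultaneously by induction on the complexity of $\varphi$: as the decompositions below show, breaking a compound formula in one of the four positions typically forces an appeal to a \emph{different} one of the four items for the immediate subformulas, so the four statements must be run together. For the base case, if $\varphi$ is atomic (a predicate, an identity, or $Et$) then (1)--(3) are immediately axiomatic, since $\varphi\in\Gamma\cap\Sigma$, $\varphi\in\Gamma\cap\Delta$, and $\varphi\in\Pi\cap\Sigma$ respectively; and (4) is axiomatic either via $Et\in\Gamma\cap\Delta$ when $\varphi=Et$, or else as precisely the second stipulated form of axiom $Et_1,\dots,Et_n,\Gamma\Rightarrow\Delta,P(t_1,\dots,t_n)\mid P(t_1,\dots,t_n),\Pi\Rightarrow\Sigma$. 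No connective or DD rules enter here, because $\imath$ builds terms rather than formulas, so the only compound cases are $\neg,\wedge,\rightarrow,\forall$.

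For the propositional inductive step I apply, reading upwards, the left rule for whichever position carries $\varphi$ on an antecedent side and the right rule for the position carrying it on a succedent side, landing on premisses that are instances of the induction hypothesis for the immediate subformulas. For instance, (2) with $\varphi=\psi\rightarrow\chi$ uses $(\Rightarrow\rightarrow\mid)$ and then $(\rightarrow\Rightarrow\mid)$, whose two premisses are (3) for $\psi$ and (2) for $\chi$; (1) with $\varphi=\neg\psi$ uses $(\neg\Rightarrow\mid)$ and $(\mid\Rightarrow\neg)$, reducing to (1) for $\psi$; and (3) with $\varphi=\neg\psi$ uses $(\mid\neg\Rightarrow)$ and $(\mid\Rightarrow\neg)$, reducing to (2) for $\psi$. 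This crossing between items is exactly why the induction is joint. In $\mathbf{K_3^w}$ the relevant introductions $(\mid\wedge_w\Rightarrow)$, $(\mid\Rightarrow\wedge_w)$, $(\Rightarrow\rightarrow_w\mid)$, $(\rightarrow_w\Rightarrow\mid)$ carry three premisses rather than one or two, so the derivation branches more; but each premiss again matches some item of the hypothesis for $\psi$ or $\chi$, with the extra context formulae absorbed schematically into $\Gamma,\Delta,\Pi,\Sigma$, and no new difficulty arises.

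The one genuinely delicate case is (4) for $\varphi=\forall x\psi$, where the existence assumptions must be tracked. Here I first apply $(\Rightarrow\forall\mid)$ to introduce $\forall x\psi$ on the right of the first sequent; this discharges a \emph{fresh} parameter $a$ and, crucially, deposits $Ea$ into the antecedent. I then instantiate the copy of $\forall x\psi$ in the second sequent's antecedent with the same $a$ via $(\mid\forall\Rightarrow)$, which is permitted precisely because $Ea$ is now present. The remaining premiss is $Ea,Et_1,\dots,Et_n,\Gamma\Rightarrow\Delta,\psi[x/a]\mid\psi[x/a],\forall x\psi,\Pi\Rightarrow\Sigma$, and since the terms of $\psi[x/a]$ are exactly the terms of $\forall x\psi$ (covered by $Et_1,\dots,Et_n$) together with $a$ (covered by the freshly introduced $Ea$), this is an instance of (4) for the strictly simpler $\psi[x/a]$, with the spare $\forall x\psi$ riding along in the $\Pi$-context; the hypothesis closes it. The analogous ``right-introduction then instantiation'' pattern handles (1)--(3) for $\forall x\psi$.

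I expect the main obstacle to be organisational rather than conceptual: the proof is routine rule-pushing, and the real work is managing the large case split (four items, four connectives, and the doubled $\mathbf{K_3}$ versus $\mathbf{K_3^w}$ stock of branching rules) together with the single substantive point, namely verifying in (4) that the supply of existence predicates is never exhausted. This holds exactly because $(\Rightarrow\forall\mid)$ furnishes an $Ea$ for each parameter it introduces while the atomic axioms of form (4) already carry the matching $Et_i$, so every appeal to the induction hypothesis has precisely the existence assumptions it requires.
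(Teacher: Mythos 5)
Your proposal is correct and follows exactly the route the paper takes: the paper's proof is simply ``by induction on the complexity of $\varphi$,'' and your joint induction over the four items, with the crossing between items at the connective cases and the $Ea$-bookkeeping in item (4) for $\forall x\psi$, is a faithful and accurate elaboration of that induction. No discrepancies with the calculus's rules or axioms arise in your case analysis.
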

\begin{proof}
 By induction on the complexity of $ \varphi $. 
\end{proof}

Here we present proofs of the admissibility of cut with all forms of cuts from \cite{PavlovicGratzl23}. But first we demonstrate the admissibility of structural rules, like in \cite{PavlovicGratzl23}.

\begin{lemma}[Admissibility of structural rules]
	Structural rules from Fig. \ref{fig::CalculusStr} are height-preserving admissible.
\end{lemma}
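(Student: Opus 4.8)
The plan is to prove, by induction on the height $n$ of a derivation, that the structural rules of Fig.~\ref{fig::CalculusStr} — weakening and contraction in each of the four zones of a bisequent — are height-preserving admissible. I would treat weakening first, since the contraction argument will occasionally invoke it together with the height-preserving Substitution Lemma (Lemma~\ref{Substitution}).

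For weakening, suppose $\vdash_n \Gamma\Rightarrow\Delta\mid\Pi\Rightarrow\Sigma$ and we wish to insert a formula $\chi$ into one zone, say obtaining $\chi,\Gamma\Rightarrow\Delta\mid\Pi\Rightarrow\Sigma$ with height still at most $n$. In the base case the sequent is axiomatic, and adding $\chi$ to any zone cannot destroy an axiom, since the atomic matching witnessing it is still present. In the inductive step I apply the inductive hypothesis to the premisses of the last rule — adding $\chi$ to the corresponding zone of each — and reapply the same rule. The only delicate point is the rules carrying a freshness condition, namely $(\mathord{\Rightarrow}\mathord{\forall\mid})$ and $(\mathord{\mid}\mathord{\Rightarrow}\mathord{\forall})$ of Fig.~\ref{fig::CalculusQ} and the rules $(\mathord{\mid\Rightarrow\imath})$, $(\mathord{\Rightarrow \imath\mid})$ of Fig.~\ref{fig::CalculusDD}: if $\chi$ contains the eigenparameter $a$, reapplying the rule would violate its side condition. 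I resolve this by first using Lemma~\ref{Substitution} to replace $a$ throughout the subderivation by a parameter not occurring in $\chi$ — which is height-preserving — and only then weakening.

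For contraction, assume $\vdash_n \chi,\chi,\Gamma\Rightarrow\Delta\mid\Pi\Rightarrow\Sigma$ (and symmetrically for the other three zones) and argue by induction on $n$ with a case split on the last rule $R$. When neither displayed copy of $\chi$ is principal in $R$, both are parametric: I apply the inductive hypothesis to the premiss(es) and reapply $R$, renaming eigenparameters via Lemma~\ref{Substitution} exactly as in the weakening case whenever $\chi$ mentions one. When $\chi$ is principal and $R$ is one of the logical rules of Fig.~\ref{fig::CalculusK3} or Fig.~\ref{fig::CalculusK3w} that \emph{consumes} $\chi$, I appeal to the height-preserving invertibility of the logical rules — the syntactic counterpart of the invertibility recorded after Theorem~\ref{Soundness}, itself obtained by a routine induction on height: inverting $R$ on the single remaining copy of $\chi$ returns premisses of height at most $n-1$ in which both occurrences have been decomposed into duplicated immediate subformulae, the inductive hypothesis then contracts these, and $R$ is reapplied.

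The step requiring the most care is contraction on a principal formula of a rule that \emph{copies} it into the premiss rather than consuming it — the universal-quantifier left rules $(\mathord{\forall}\mathord{\Rightarrow}\mathord{\mid})$, $(\mathord{\mid}\mathord{\forall}\mathord{\Rightarrow})$, the existence rules, and the DD rules of Figs.~\ref{fig::CalculusQ}, \ref{fig::CalculusId} and \ref{fig::CalculusDD}, in which $\forall x\varphi$, $Et$, $Ec$ or $c=\imath x\varphi$ persists above the inference. Here the relevant ``inversion'' is nothing but weakening, which is already height-preserving, so contraction on the copied principal formula is immediate: one copy still occurs in the premiss, the inductive hypothesis removes the spare duplicate among the remaining occurrences, and $R$ is reapplied, the existence side-conditions surviving into the premiss unchanged. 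The residual bookkeeping — verifying contraction separately in each of the four zones and covering the three-premiss $\bf K_3^w$ rules of Fig.~\ref{fig::CalculusK3w} — introduces no new difficulty beyond a larger number of cases.
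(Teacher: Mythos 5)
Your proposal is correct and follows essentially the same route as the paper: the paper's proof is exactly an induction on the height of the derivation, with the remark (made immediately after the lemma) that the contraction case presupposes height-preserving invertibility, itself established by induction on height using the Substitution Lemma~\ref{Substitution} --- precisely the ingredients you deploy. Your additional care with eigenparameter renaming and with the rules that copy their principal formula (where inversion reduces to weakening) is the standard way of filling in the details the paper leaves implicit.
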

\begin{proof}
	By induction on the height of the derivation. 
\end{proof}

In fact, the proof of admissibility of contraction presupposes the following:

\begin{lemma}[Invertibility]
All the rules of the bisequent calculi in question are height-preserving invertible.
\end{lemma}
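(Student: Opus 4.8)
The plan is to prove height-preserving invertibility for every rule by induction on the height of the derivation, treating each rule family uniformly. For a given rule with conclusion $B$ and premiss(es) $B_1,\ldots,B_k$, I must show that whenever $\vdash_n B$, then $\vdash_n B_i$ for each $i$. First I would handle the base case: if $B$ is axiomatic (either of the standard form with an atom in the right cells, or one of the existence-guarded atomic axioms), then I check that each premiss $B_i$ is already axiomatic as well, or is derivable in height $0$, since the principal formula of the rule is never the atom realizing the axiom, and the side formulae (including the decomposed constituents appearing in the premisses) preserve an axiomatic pair. The induction step splits according to whether the last rule $R'$ applied to derive $B$ is the very rule $R$ whose invertibility we want, or some other rule.

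The core of the argument is the inductive step for a \emph{different} last rule $R'$. Here the standard strategy is: the principal formula of $R$ (the formula being decomposed, e.g.\ $\varphi\wedge\psi$ in a specific cell) occurs in $B$ and, since $R'\neq R$, this principal formula is passive in $R'$, so it is inherited unchanged from the premiss(es) $B'$ of $R'$. I would apply the induction hypothesis for invertibility of $R$ to each premiss $B'$ (whose height is strictly smaller), obtaining the $R$-decomposed versions of those premisses, and then reapply $R'$ to reassemble the desired premiss $B_i$ of $R$ at the same height. This commuting-conversion pattern is routine for the propositional rules of $\bf K_3$ and $\bf K_3^w$, for the $E$-rules, and the $\forall$-rules. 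For the three-premiss weak-Kleene rules one simply tracks all three premisses through the same conversion. When $R'=R$, the premiss we need is literally a premiss of the last inference, so it is available at smaller height and the claim is immediate.

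The main obstacle is the rules with eigenvariable conditions and the $\imath$-rules and identity rules. For the quantifier rules $(\mathord{\Rightarrow}\mathord{\forall\mid})$, $(\mathord{\mid}\mathord{\Rightarrow}\mathord{\forall})$ and the $\imath$-rules $(\mathord{\mid\Rightarrow\imath})$, $(\mathord{\Rightarrow \imath\mid})$, whose premisses contain a fresh parameter $a$, invertibility cannot just copy the premiss: when $R'$ is a rule whose own eigenvariable clashes with $a$, I first use the height-preserving Substitution lemma (Lemma \ref{Substitution}) to rename the offending eigenvariable to a genuinely fresh parameter, preserving height, and only then push the induction hypothesis through. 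I would remark that the rule $(\mid\mathord{\Rightarrow}=)$ and the two-premiss $\imath$-rules with identity side formulae are the delicate cases, since their premisses reintroduce identities and existence atoms that must be matched against the passive context of $R'$; here the admissibility of (height-preserving) weakening, already available from the structural-rules lemma, is used to align contexts before reapplying $R'$. The whole argument is a simultaneous induction over all rules, so I would state it as a single induction on derivation height with a case analysis on the last rule, and note explicitly that the eigenvariable renaming is the only place where more than a direct permutation of rules is needed.
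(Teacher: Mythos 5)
Your proposal is correct and follows essentially the same route as the paper, whose proof is exactly the argument you describe: induction on the height of the derivation, with the Substitution lemma (Lemma \ref{Substitution}) invoked precisely to handle eigenvariable renaming in the quantifier and $\imath$-rules. Your additional observations (commuting conversions for the non-principal cases, cumulative premisses and height-preserving weakening for the atomic-principal rules such as $(\mid\mathord{\Rightarrow}=)$ and the DD rules) are the standard and correct unpacking of that one-line proof, and they introduce no circularity since only weakening, not contraction, is presupposed.
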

\begin{proof}
 By induction on the height of the derivation, using Lemma \ref{Substitution}. 
\end{proof}

We also need the following:

\begin{lemma}[Transfer]
	The following rules are height-preserving admissible:
	
	\1
	
	$(LTr)$ \, $\dfrac{\Gamma
		\Rightarrow \Delta \mid \varphi, \Pi\Rightarrow\Sigma}{\varphi, \Gamma\Rightarrow\Delta \mid \Pi\Rightarrow\Sigma}$ \hspace{1cm}
	$(RTr)$ \, $\dfrac{\Gamma
		\Rightarrow \Delta, \varphi \mid\Pi\Rightarrow\Sigma}{\Gamma\Rightarrow\Delta \mid \Pi\Rightarrow\Sigma, \varphi}$

\end{lemma}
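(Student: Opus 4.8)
The plan is to prove $(LTr)$ and $(RTr)$ \emph{simultaneously} by induction on the height $n$ of the derivation of the premise, since the two rules are mutually dual and the principal cases of each will appeal to the other at a strictly smaller height. Recall the intended reading: moving $\varphi$ from the antecedent of the right component ($\Pi$, where $\val(\varphi)\neq 0$ is required) into the antecedent of the left component ($\Gamma$, where $\val(\varphi)=1$ is required) is a strengthening of the falsification condition, and dually $(RTr)$ moves $\varphi$ from $\Delta$ to $\Sigma$; so every falsifier of the conclusion is already a falsifier of the premise and soundness is immediate. The only real issue is the syntactic, height-preserving reconstruction of the derivation.

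For the base case I would check that an axiomatic premise yields an axiomatic conclusion. The crossover cases are the interesting ones: if the transferred atom satisfies $\varphi\in\Pi\cap\Sigma$, then after $(LTr)$ it satisfies $\varphi\in\Gamma\cap\Sigma$, which is again axiomatic; if $\varphi\in\Gamma\cap\Delta$, then after $(RTr)$ it gives $\varphi\in\Gamma\cap\Sigma$; and a type-2 axiom $\Ex\vec t,\ldots\Rightarrow\ldots,P\mid P,\ldots\Rightarrow$ whose transferred formula is the distinguished $P$ collapses to a type-1 axiom ($P\in\Gamma\cap\Delta$ for $(LTr)$, or $P\in\Pi\cap\Sigma$ for $(RTr)$). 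In the inductive step, if the transferred formula is \emph{parametric} in the last rule, I would simply apply the induction hypothesis to each premise and reapply the rule; the fresh-variable conditions of the quantifier and DD rules are untouched, because a transferred formula, being present in the conclusion, cannot contain the eigenvariable.

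The heart of the argument is the \emph{principal} case, and here I would exploit the built-in duality of the bisequent rules: every rule that introduces a formula in a gap-permitting slot ($\Pi$ or $\Delta$) is paired with a rule introducing the same formula in the corresponding definite slot ($\Gamma$ or $\Sigma$). For instance, if $\varphi=\neg\chi$ was introduced by $(\mid\neg\Rightarrow)$, then from its premise I apply $(RTr)$ to $\chi$ (at height $n-1$, by the simultaneous hypothesis) and close with $(\neg\Rightarrow\mid)$; if $\varphi=\chi_1\wedge\chi_2$ was introduced by $(\mid\wedge\Rightarrow)$ I apply $(LTr)$ twice and close with $(\wedge\Rightarrow\mid)$; the implication cases split the two premises between $(LTr)$ and $(RTr)$ before applying the dual rule. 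The cumulative rules for $\forall$ and for DD, which retain the principal formula in their premises — such as $(\mid\forall\Rightarrow)$, $(\mid\imath\Rightarrow 1)$, $(\mid\imath\Rightarrow 2)$, $(\mid\Rightarrow\imath)$ and their left-component partners — are handled the same way, transferring \emph{both} the principal formula and its instance before invoking $(\forall\Rightarrow\mid)$, $(\imath\Rightarrow\mid 1)$, $(\imath\Rightarrow\mid 2)$, $(\Rightarrow\imath\mid)$; and $(\Ex Tr_1)$, $(\Ex Tr_2)$ with the transferred formula principal reduce trivially, since the sought bisequent is then exactly the premise.

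The step I expect to be the main obstacle is the weak-Kleene variant, where the principal formula may have been introduced by one of the three-premiss rules of Fig.~\ref{fig::CalculusK3w}. Here I would verify that only the \emph{first} premiss of such a rule is needed: transferring its two displayed subformulas inward and then applying the inherited strong-Kleene dual rule (which is unchanged in $\bf K_3^w$, since only the four rules of Fig.~\ref{fig::CalculusK3w} differ) reconstructs the transferred conclusion at height $\leq n$. That this single premiss suffices is precisely what the semantic strengthening guarantees, so the delicate point is just to confirm that the required dual rule is available in $\bf K_3^w$ and that the two transfers applied to the first premiss stay within the height bound — both of which hold. Assembling these cases completes the induction and yields height-preserving admissibility of $(LTr)$ and $(RTr)$.
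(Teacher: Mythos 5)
Your overall strategy is sound and is essentially the one the paper gestures at (the paper's proof is a one-line deferral to the Pavlovi\'{c}--Gratzl induction on height): simultaneous induction for $(LTr)$ and $(RTr)$ with mutual appeals at smaller height, crossover base cases, parametric cases via the induction hypothesis (with the correct observation about eigenvariables), principal cases via the dual introduction rule, and trivial reductions for $(\Ex Tr_1)$, $(\Ex Tr_2)$ and the cumulative $\forall$/DD rules. All of that checks out against the rules of Figs.~\ref{fig::CalculusK3}, \ref{fig::CalculusQ}, \ref{fig::CalculusId}, \ref{fig::CalculusDD}.

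There is, however, a genuine gap in your weak-Kleene analysis: you treat only half of the problematic cases, and the half you omit is exactly where your stated recipe breaks down. You consider the transferred formula being introduced by one of the three-premiss rules of Fig.~\ref{fig::CalculusK3w} (e.g.\ $(\mid\wedge_w\Rightarrow)$ for $(LTr)$, $(\Rightarrow\rightarrow_w\mid)$ for $(RTr)$), where indeed the first premiss suffices and the needed dual rule is the unchanged strong-style one. But in $\mathbf{K}_3^w$ the mirror situation also arises: the transferred formula was introduced by a rule \emph{shared} with $\mathbf{K}_3$, while the \emph{target} rule is three-premiss. Concretely, for $(LTr)$ with $\varphi\rightarrow\psi$ principal, the last rule is $(\mid\rightarrow\Rightarrow)$ (unchanged in $\mathbf{K}_3^w$) with two premisses, but the only rule of $\mathbf{K}_3^w$ introducing $\rightarrow$ into the left antecedent is $(\rightarrow_w\Rightarrow\mid)$, which needs \emph{three} premisses; dually, for $(RTr)$ with $\varphi\wedge\psi$ principal via $(\Rightarrow\wedge\mid)$, the target $(\mid\Rightarrow\wedge_w)$ is three-premiss. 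Your recipe ``split the two premises between $(LTr)$ and $(RTr)$ and apply the dual rule'' cannot be executed there: after the transfers you have only two of the three required premisses. The repair needs the height-preserving admissible weakening of Fig.~\ref{fig::CalculusStr} (which the paper proves \emph{before} Transfer, and which your proposal never invokes): from $(RTr)$ applied to $\Gamma\Rightarrow\Delta,\varphi\mid\Pi\Rightarrow\Sigma$ one gets $\Gamma\Rightarrow\Delta\mid\Pi\Rightarrow\Sigma,\varphi$, and weakening yields both $\Gamma\Rightarrow\Delta\mid\Pi\Rightarrow\Sigma,\varphi,\psi$ and $\psi,\Gamma\Rightarrow\Delta\mid\Pi\Rightarrow\Sigma,\varphi$, while $(LTr)$ on the other premiss plus weakening yields $\varphi,\psi,\Gamma\Rightarrow\Delta\mid\Pi\Rightarrow\Sigma$; then $(\rightarrow_w\Rightarrow\mid)$ closes the case within the height bound, and analogously for the $(RTr)$/$\wedge$ case. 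Without this, two principal cases of the induction for $\mathbf{K}_3^w$ simply fail as written.
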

\begin{proof}
	Straightforward extension of the proof by induction on the height of the derivation from \cite{PavlovicGratzl23}. 
\end{proof}

\begin{figure}[t!]
	\centering
	\bgroup		
\begin{scriptsize}

\begin{flushleft}
	\begin{tabular}{llll}	
		$(\mathord{W}\mathord{\Rightarrow\mid})$ \ $\dfrac{\Gamma
			\Rightarrow \Delta \mid S}{\varphi, \Gamma \Rightarrow \Delta \mid S}$ &
		$(\mathord{\Rightarrow}\mathord{W\mid})$ \ $\dfrac{\Gamma
			\Rightarrow \Delta \mid S}{\Gamma
			\Rightarrow \Delta, \varphi \mid S}$ &
		$(\mathord{\mid W}\mathord{\Rightarrow})$ \ $\dfrac{S \mid \Pi\Rightarrow\Sigma}{S\mid \varphi, \Pi\Rightarrow\Sigma}$ &
		$(\mathord{\mid\Rightarrow}\mathord{W})$ \ $\dfrac{S \mid \Pi\Rightarrow\Sigma}{S \mid \Pi\Rightarrow\Sigma, \varphi}$\\[16pt]	
	\end{tabular}
\end{flushleft}

\begin{flushleft}
	\begin{tabular}{llll}	
		$(\mathord{C}\mathord{\Rightarrow\mid})$ \ $\dfrac{\varphi,\varphi,\Gamma
			\Rightarrow \Delta \mid S}{\varphi, \Gamma \Rightarrow \Delta \mid S}$ &
		$(\mathord{\Rightarrow}\mathord{C\mid})$ \ $\dfrac{\Gamma
			\Rightarrow \Delta, \varphi, \varphi \mid S}{\Gamma
			\Rightarrow \Delta, \varphi \mid S}$ &		
		$(\mathord{\mid C}\mathord{\Rightarrow})$ \ $\dfrac{S \mid\varphi,\varphi, \Pi\Rightarrow\Sigma}{S\mid \varphi, \Pi\Rightarrow\Sigma}$ &
		$(\mathord{\mid\Rightarrow}\mathord{C})$ \ $\dfrac{S \mid \Pi\Rightarrow\Sigma, \varphi, \varphi}{S \mid \Pi\Rightarrow\Sigma, \varphi}$\\[16pt]	
	\end{tabular}
\end{flushleft}

\end{scriptsize}

\egroup	
\caption{Admissible structural rules}
\label{fig::CalculusStr}
\end{figure}

In bisequent framework, we have  several cut rules\footnote{Notice that we considered just two cut rules on the propositional level \cite{IndrzejczakPetrukhin}. However, the first-order case requires more options, which are actually the same as Pavlovi\'{c} and Gratzl have \cite{PavlovicGratzl23}. They follow the strategy of Fjellstad \cite{Fjellstad2} in this respect.} listed in Fig. \ref{fig::CalculusCut}.

\begin{figure}[t!]
	\centering
	\bgroup
	\begin{scriptsize}			

\begin{center}
	(E-Cut) $ \dfrac{\Gamma\Rightarrow\Delta\mid \Lambda\Rightarrow\Theta,\Ex t\quad\Ex t,\Pi \Rightarrow\Sigma \mid\Xi\Rightarrow\Omega}{\Gamma,\Pi\Rightarrow\Delta,\Sigma \mid \Lambda,\Xi\Rightarrow\Theta,\Omega} $
\end{center}

\begin{center}
(L-Cut) $ \dfrac{\Ex t_1,\ldots,\Ex t_n,\Gamma\Rightarrow\Delta\mid \Lambda\Rightarrow\Theta,P(t_1,\ldots,t_n)\quad\Ex t_1,\ldots,\Ex t_n,P(t_1,\ldots,t_n),\Pi \Rightarrow\Sigma \mid\Xi\Rightarrow\Omega}{\Ex t_1,\ldots,\Ex t_n\Gamma,\Pi\Rightarrow\Delta,\Sigma \mid \Lambda,\Xi\Rightarrow\Theta,\Omega} $
\end{center}

\begin{center}
	(O-Cut) $ \dfrac{\Gamma\Rightarrow\Delta, \varphi\mid\Lambda\Rightarrow\Theta\quad \varphi,\Pi \Rightarrow\Sigma \mid\Xi\Rightarrow\Omega}{\Gamma,\Pi\Rightarrow\Delta,\Sigma \mid \Lambda,\Xi\Rightarrow\Theta,\Omega} $
\end{center}
\begin{center} 
	(I-Cut) $  \dfrac{\Gamma\Rightarrow\Delta\mid\Lambda\Rightarrow\Theta, \varphi\quad\Pi \Rightarrow\Sigma \mid\varphi,\Xi\Rightarrow\Omega}{\Gamma,\Pi\Rightarrow\Delta,\Sigma \mid \Lambda,\Xi\Rightarrow\Theta,\Omega} $
\end{center}
\begin{center} 
	(R-Cut) $ \dfrac{\Gamma\Rightarrow\Delta\mid \varphi,\Lambda\Rightarrow\Theta\quad\Pi \Rightarrow\Sigma,\varphi \mid\Xi\Rightarrow\Omega}{\Gamma,\Pi\Rightarrow\Delta,\Sigma \mid \Lambda,\Xi\Rightarrow\Theta,\Omega} $
\end{center}

\begin{center} 
(3-Cut)	$ \dfrac{\varphi,\Gamma_1\Rightarrow\Delta_1\mid \Lambda_1\Rightarrow\Theta_1\quad\Gamma_2\Rightarrow\Delta_2 \mid\Lambda_2\Rightarrow\Theta_2,\varphi\quad \Gamma_3\Rightarrow\Delta_3,\varphi \mid\varphi,\Lambda_3\Rightarrow\Theta_3}{\Gamma_1,\Gamma_2,\Gamma_3\Rightarrow\Delta_1,\Delta_2,\Delta_3 \mid \Lambda_1,\Lambda_2,\Lambda_3\Rightarrow\Theta_1,\Theta_2,\Theta_3} $
\end{center}
\end{scriptsize}

\egroup	
\caption{Cut rules}
\label{fig::CalculusCut}
\end{figure}

\begin{theorem}[Cut admissibility]
The rules (E-Cut), (L-Cut), (O-Cut), (I-Cut), (R-Cut), (3-Cut) are admissible.	
\end{theorem}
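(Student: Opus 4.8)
The plan is to prove all six cut rules admissible simultaneously by a lexicographic induction on the pair $\langle |\varphi|, h\rangle$, where $|\varphi|$ is the complexity of the cut formula and $h$ is the sum of the heights of the derivations of the premises of the cut. Treating the rules together is essential, since reducing one form of cut typically produces cuts of another form on the same or a smaller formula, so the induction hypothesis must be available for every cut shape at once. For each cut application I would distinguish cases according to whether the cut formula is principal in the last inference of each premise, and in each case either replace the cut by cuts falling under the induction hypothesis or eliminate it outright.

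First I would dispose of the cases in which some premise is axiomatic: here the conclusion is obtained from an axiom by the height-preserving admissible weakening rules, or the cut formula simply vanishes. Next come the \emph{non-principal} cases, where the cut formula is a side formula of the last rule in at least one premise. In these cases the cut permutes upward past that rule, which strictly decreases $h$ and so is covered by the induction hypothesis. Three points need attention: (a) when the rule carries an eigenvariable condition --- the $\forall$-rules of Fig.~\ref{fig::CalculusQ} and the DD-rules of Fig.~\ref{fig::CalculusDD} --- I first rename the fresh parameter via the Substitution Lemma so that it does not clash with the cut partner; (b) when the cut formula has been duplicated I appeal to the height-preserving admissibility of contraction; (c) when the side rule is one of the three-premise weak-Kleene rules of Fig.~\ref{fig::CalculusK3w}, the cut is pushed into each of its premises.

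The core of the argument is the \emph{principal} cases, in which the cut formula is principal in every premise. For the propositional connectives I would use the Invertibility Lemma to extract the premise-sequents of the introduction rules and then replace the cut by cuts on the immediate subformulas, whose complexity is strictly smaller; because of the three truth-value positions these reductions are naturally carried out with (3-Cut) together with (O-Cut), (I-Cut) and (R-Cut) on the subformulas, and the Transfer Lemma is used to move active formulas between the two components of a bisequent. For $\forall$ and for $\Ex$ the principal case reduces to a cut on an instance $\varphi[x/t]$, obtained by the Substitution Lemma, again lowering the complexity. The atomic cuts (E-Cut) and (L-Cut) sit at the base of the complexity induction and are discharged directly from the shape of the $\Ex$-rules and the axioms.

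The hard part will be the principal cases for the identity rules of Fig.~\ref{fig::CalculusId} and the DD-rules of Fig.~\ref{fig::CalculusDD}, where identities --- including $c=\imath x\varphi$ --- occur as principal formulas. The rule $(\mid\mathord{\Rightarrow}=)$ deliberately sacrifices the subformula property, so its principal case cannot be reduced to strictly simpler formulas in the usual way; the resolution is that the formulas it introduces are the atoms $A[x/t]$, $A[x/s]$ and the identity $t\approx s$, all already atomic, so feeding the cut partner into each of its three premises and reassembling keeps us within the atomic cuts (E-Cut) and (L-Cut), which form the base level and are handled directly. The most delicate bookkeeping arises when (3-Cut) meets the three-premise weak-Kleene rules or the two-premise DD-rules: permuting the cut there multiplies premises, and one must check both that the eigenvariable conditions survive the required substitutions and that the measure $\langle|\varphi|,h\rangle$ still strictly decreases in each generated cut.
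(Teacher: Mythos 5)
Your outer scaffolding matches the paper's: the paper also proves (E-Cut) and (L-Cut) first (importing the argument of Pavlovi\'{c} and Gratzl unchanged) and then establishes the remaining four cut rules simultaneously by double induction on the complexity of the cut formula and the sum of the heights of the premises, with the usual principal/non-principal case split, the Substitution Lemma for eigenvariable renaming, and height-preserving contraction and invertibility. Two corrections, however. First, a minor one: the rule $(\mid\mathord{\Rightarrow}=)$ has \emph{no} principal formula in its conclusion (its conclusion is just the context $\Gamma\Rightarrow\Delta\mid\Pi\Rightarrow\Sigma$), so no principal case ever arises for it; any cut meeting it simply permutes upward, and your worry about its lack of the subformula property creating an irreducible principal case is misplaced.

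The genuine gap is in what you yourself flag as ``the hard part.'' Your resolution --- that identities, ``all already atomic,'' put us ``within the atomic cuts (E-Cut) and (L-Cut), which form the base level and are handled directly'' --- fails for cut formulas of the form $c=\imath x\varphi$, which is precisely the one new situation this paper has to handle. Complexity here is measured by the number of logical symbols \emph{including} predicates and everything inside the description, so $c=\imath x\varphi$ is grammatically atomic yet sits strictly \emph{above} $\varphi[x/c]$ and $c=b$ in the complexity order; moreover (E-Cut) and (L-Cut) only cover cut formulas $Et$ and parameter-atoms $P(t_1,\ldots,t_n)$, not identities containing descriptions. A (3-Cut) on $c=\imath x\varphi$ whose three premises all end with DD rules from Fig.~\ref{fig::CalculusDD} cannot be ``discharged directly'': the paper's actual argument observes that the leftmost premise can only come from $(\mathord{\mid\Rightarrow\imath})$, whose premises expose $\varphi[x/c]$ and $Ea,\varphi[x/a],\ldots\Rightarrow\ldots,a=c$ with $a$ fresh; the middle premise comes from $(\mathord{\imath\Rightarrow\mid 1})$ or $(\mathord{\imath\Rightarrow\mid 2})$ and the rightmost from $(\mathord{\Rightarrow\imath\mid})$, $(\mathord{\mid\imath\Rightarrow 1})$ or $(\mathord{\mid\imath\Rightarrow 2})$, giving six subcases. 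In each, cross-cuts between these exposed premises are justified either by the height induction (same cut formula, smaller $h$) or by the complexity induction (cut formulas $\varphi[x/c]$, $\varphi[x/b]$, $c=b$, all strictly simpler), with the Substitution Lemma instantiating eigenvariables and contraction cleaning up the contexts. Without this analysis --- which is the entire substance of the paper's proof beyond the cited prior work --- your induction has no way to terminate on DD-identity cut formulas, so the proposal as it stands does not go through.
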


\begin{proof}
 Admissibility of $(E-Cut)$ and $(L-Cut)$ is proved first in \cite{PavlovicGratzl23}, and their proof applies to our system with no changes. The remaining variants of cut are
proved in \cite{PavlovicGratzl23} simultaneously, by double induction on the complexity of the cut formula 
and on the height of the cut (the sum of heights of premises of cut). The extension of their proof to cover our rules is in some cases straightforward so we consider only the most complex situation with triple cut:

{\tiny \begin{center} 
	$ \dfrac{\Gamma_1\Rightarrow\Delta_1\mid \Lambda_1\Rightarrow\Theta_1, c=\imath x\varphi\quad c=\imath x\varphi, \Gamma_2\Rightarrow\Delta_2 \mid\Lambda_2\Rightarrow\Theta_2\quad Ec, \Gamma_3\Rightarrow\Delta_3,c=\imath x\varphi \mid c=\imath x\varphi,\Lambda_3\Rightarrow\Theta_3}{Ec, \Gamma_1,\Gamma_2,\Gamma_3\Rightarrow\Delta_1,\Delta_2,\Delta_3 \mid \Lambda_1,\Lambda_2,\Lambda_3\Rightarrow\Theta_1,\Theta_2,\Theta_3} $
\end{center}}

for simplicity let us refer to the premisses of this cut as $P_l, P_m, P_r$.
Since the middle premiss $P_m$ may be derived either by $(\imath\Rightarrow\mid1)$ or $(\imath\Rightarrow\mid2)$ and the rightmost premiss $P_r$ may be derived either by $(\Rightarrow\imath\mid)$ (on the left identity) or $(\mid\imath\Rightarrow1)$ or $(\mid\imath\Rightarrow2)$ (on the right identity) we have six subcases in total.
The leftmost premiss $P_l$ is derivable only by $(\mid\Rightarrow\imath)$ hence we have always:
\begin{prooftree}
	\AxiomC{$D_1$}
	\UnaryInfC{$\Gamma_1\Rightarrow\Delta_1\mid \Lambda_1\Rightarrow\Theta_1, \varphi[x/c]$}
	\AxiomC{$D_2$}
	\UnaryInfC{$Ea, \varphi[x/a], \Gamma_1\Rightarrow\Delta_1,\mid \Lambda_1\Rightarrow\Theta_1, c=a$}
	\RightLabel{$(\mid\Rightarrow\imath)$}
	\BinaryInfC{$\Gamma_1\Rightarrow\Delta_1\mid\Lambda_1\Rightarrow\Theta_1, c=\imath x\varphi$}
\end{prooftree}

1. and let the middle premiss be derived as follows:
\begin{prooftree}	
	\AxiomC{$D_3$}
	\UnaryInfC{$c=\imath x\varphi, \varphi[x/c], \Gamma_2\Rightarrow\Delta_2,\mid \Lambda_2\Rightarrow\Theta_2$}
	\RightLabel{$(\imath\Rightarrow\mid)$}
	\UnaryInfC{$c=\imath x\varphi, \Gamma_2\Rightarrow\Delta_2\mid\Lambda_2\Rightarrow\Theta_2 $}
\end{prooftree}

1.1. and the rightmost one:
\begin{prooftree}
	\AxiomC{$D_4$}
	\UnaryInfC{$Ec, \Gamma_3\Rightarrow\Delta_3,c=\imath x\varphi \mid c=\imath x\varphi,\varphi[x/c],\Lambda_3\Rightarrow\Theta_3$}
	\RightLabel{$(\mid\imath\Rightarrow1)$}
	\UnaryInfC{$Ec, \Gamma_3\Rightarrow\Delta_3,c=\imath x\varphi \mid c=\imath x\varphi,\Lambda_3\Rightarrow\Theta_3$}
\end{prooftree}

let $ A_1 $ stands for the following:
\begin{prooftree}
	\AxiomC{$D_1$}
	\UnaryInfC{$\Gamma_1\Rightarrow\Delta_1\mid \Lambda_1\Rightarrow\Theta_1, \varphi[x/c]$}
\end{prooftree}

we transform the proof as follows:
{\small \begin{prooftree}
\AxiomC{$A_1$}
	\AxiomC{$P_l$}
	\AxiomC{$P_m$}	
	\AxiomC{$D_4$}
	\UnaryInfC{$Ec, \Gamma_3\Rightarrow\Delta_3,c=\imath x\varphi \mid c=\imath x\varphi,\varphi[x/c],\Lambda_3\Rightarrow\Theta_3$}
	\RightLabel{(3-Cut)}
	\TrinaryInfC{$Ec, \Gamma_1,\Gamma_2,\Gamma_3\Rightarrow\Delta_1,\Delta_2,\Delta_3 \mid\varphi[x/c], \Lambda_1,\Lambda_2,\Lambda_3\Rightarrow\Theta_1,\Theta_2,\Theta_3$}
	\RightLabel{(I-Cut)}
	\BinaryInfC{$Ec, \Gamma_1,\Gamma_1,\Gamma_2,\Gamma_3\Rightarrow\Delta_1,\Delta_1\Delta_2,\Delta_3\mid\Lambda_1,\Lambda_1,\Lambda_2,\Lambda_3\Rightarrow\Theta_1,\Theta_1,\Theta_2,\Theta_3$}
	\RightLabel{$(C)$}
	\UnaryInfC{$Ec, \Gamma_1,\Gamma_2,\Gamma_3\Rightarrow\Delta_1,\Delta_2,\Delta_3\mid\Lambda_1,\Lambda_2,\Lambda_3\Rightarrow\Theta_1,\Theta_2,\Theta_3$}
\end{prooftree}}

where $(3-Cut)$ is admissible by induction on the height and $(I-Cut)$ by induction on the degree.

1.2. $P_r$ is derived as follows:
\begin{prooftree}
\AxiomC{$A_2$}
\AxiomC{$A_3$}
	\RightLabel{$(\mid\imath\Rightarrow 2)$}
	\BinaryInfC{$Ec, \Gamma_3\Rightarrow\Delta_3,c=\imath x\varphi \mid c=\imath x\varphi,\Lambda_3\Rightarrow\Theta_3$}
\end{prooftree}

where $ A_2 $ stands for
\begin{prooftree}
	\AxiomC{$D_4$}
	\UnaryInfC{$Ec, \Gamma_3\Rightarrow\Delta_3,c=\imath x\varphi, \varphi[x/b] \mid c=\imath x\varphi,\Lambda_3\Rightarrow\Theta_3$}
\end{prooftree}

and $ A_3 $ stands for
\begin{prooftree}
	\AxiomC{$D_5$}
	\UnaryInfC{$Ec, \Gamma_3\Rightarrow\Delta_3,c=\imath x\varphi \mid c=\imath x\varphi,c=b,\Lambda_3\Rightarrow\Theta_3$}
\end{prooftree}

note that also $Eb$ must occur in $\Gamma_3$. We perform:

$(3-Cut)$ on $P_l, P_m$ and $Ec, \Gamma_3\Rightarrow\Delta_3,c=\imath x\varphi, \varphi[x/b] \mid c=\imath x\varphi,\Lambda_3\Rightarrow\Theta_3$ yield $S_4 := Ec, \Gamma_1,\Gamma_2,\Gamma_3\Rightarrow\Delta_1,\Delta_2,\Delta_3, \varphi[x/b]\mid\Lambda_1,\Lambda_2,\Lambda_3\Rightarrow\Theta_1,\Theta_2,\Theta_3$

$(3-Cut)$ on $P_l, P_m$ and $Ec, \Gamma_3\Rightarrow\Delta_3,c=\imath x\varphi\mid c=\imath x\varphi,c=b,\Lambda_3\Rightarrow\Theta_3$ yield $S_5 := Ec, \Gamma_1,\Gamma_2,\Gamma_3\Rightarrow\Delta_1,\Delta_2,\Delta_3, \mid c=b,\Lambda_1,\Lambda_2,\Lambda_3\Rightarrow\Theta_1,\Theta_2,\Theta_3$

both cuts are admissible by induction on the height. By substitution lemma on $D_2$ we obtain a proof of $S_2 := Eb, \varphi[x/b], \Gamma_1\Rightarrow\Delta_1,\mid \Lambda_1\Rightarrow\Theta_1, c=b$. We finish with:

{\small \begin{prooftree}
	\AxiomC{$S_5$}
	\AxiomC{$S_4$}	
	\AxiomC{$S_2$}
	\RightLabel{(O-Cut)$ (C) $}
	\BinaryInfC{$Ec, \Gamma_1,\Gamma_2,\Gamma_3\Rightarrow\Delta_1,\Delta_2,\Delta_3 \mid c=b, \Lambda_1,\Lambda_2,\Lambda_3\Rightarrow\Theta_1,\Theta_2,\Theta_3$}
	\RightLabel{(I-Cut)$(C)$}
	\BinaryInfC{$Ec, \Gamma_1,\Gamma_2,\Gamma_3\Rightarrow\Delta_1,\Delta_2,\Delta_3\mid\Lambda_1,\Lambda_2,\Lambda_3\Rightarrow\Theta_1,\Theta_2,\Theta_3$}
\end{prooftree}}

where both cuts are admissible by induction on the degree.

1.3. $P_r$ is derived as follows:

\begin{prooftree}
\AxiomC{$A_4$}
\AxiomC{$A_5$}
	\RightLabel{$(\Rightarrow\imath\mid)$}
	\BinaryInfC{$Ec,\Gamma_3\Rightarrow\Delta_3,c=\imath x\varphi \mid c=\imath x\varphi,\Lambda_3\Rightarrow\Theta_3$}
\end{prooftree}

where $ A_4 $ stands for
\begin{prooftree}
	\AxiomC{$D_4$}
	\UnaryInfC{$Ec, \Gamma_3\Rightarrow\Delta_3,\varphi[x/c] \mid c=\imath x\varphi,\Lambda_3\Rightarrow\Theta_3$}
\end{prooftree}

and $ A_5 $ stands for
\begin{prooftree}
	\AxiomC{$D_5$}
	\UnaryInfC{$Ea',Ec, \Gamma_3\Rightarrow\Delta_3,c=a'\mid c=\imath x\varphi,\varphi[x/a'],\Lambda_3\Rightarrow\Theta_3$}
\end{prooftree}

where $a'$ is eigenvariable different from $a$. We perform:

{\small \begin{prooftree}
	\AxiomC{$\Gamma_1\Rightarrow\Delta_1\mid\Lambda_1\Rightarrow\Theta_1, c=\imath x\varphi$}
	\AxiomC{$D_4$}
	\UnaryInfC{$Ec, \Gamma_3\Rightarrow\Delta_3, \varphi[x/c] \mid c=\imath x\varphi,\Lambda_3\Rightarrow\Theta_3$}
	\RightLabel{(I-Cut)}
	\BinaryInfC{$Ec, \Gamma_1, \Gamma_3\Rightarrow\Delta_1, \Delta_3,\varphi[x/c] \mid \Lambda_1,\Lambda_3\Rightarrow\Theta_1,\Theta_3$}
\end{prooftree}}

and

{\footnotesize \begin{prooftree}
	
	\AxiomC{$\Gamma_1\Rightarrow\Delta_1\mid\Lambda_1\Rightarrow\Theta_1, c=\imath x\varphi$}
\AxiomC{$A_6$}
	\AxiomC{$Ec,\Gamma_3\Rightarrow\Delta_3,c=\imath x\varphi \mid c=\imath x\varphi,\Lambda_3\Rightarrow\Theta_3$}
	\TrinaryInfC{$Ec, \varphi[x/c],\Gamma_1,\Gamma_2,\Gamma_3\Rightarrow\Delta_1,\Delta_2,\Delta_3 \mid \Lambda_1,\Lambda_2,\Lambda_3\Rightarrow\Theta_1,\Theta_2,\Theta_3\qquad\qquad\qquad(3-Cut)$}
\end{prooftree}}

where $ A_6 $ stands for
\begin{prooftree}
	\AxiomC{$D_3$}
	\UnaryInfC{$c=\imath x\varphi, \varphi[x/c], \Gamma_2\Rightarrow\Delta_2,\mid \Lambda_2\Rightarrow\Theta_2$}
\end{prooftree}

Both cuts are admissible by induction on the height and the resulting sequents by $(O-Cut)$ lead to desired result.

\1

2. Now let the middle premiss be derived as follows:

{\footnotesize \begin{prooftree}
\AxiomC{$A_7$}
\AxiomC{$A_8$}
	\RightLabel{$(\imath\Rightarrow\mid)$}
	\BinaryInfC{$Eb,c=\imath x\varphi,\Gamma_2\Rightarrow\Delta_2 \mid\Lambda_2\Rightarrow\Theta_2$}
\end{prooftree}}

where $ A_7 $ stands for
\begin{prooftree}
	\AxiomC{$D_3$}
	\UnaryInfC{$Eb, c=\imath x\varphi,\Gamma_2\Rightarrow\Delta_2\mid \Lambda_2\Rightarrow\Theta_2,\varphi[x/b]$}
\end{prooftree}

and $ A_8 $ stands for
\begin{prooftree}
	\AxiomC{$D_4$}
	\UnaryInfC{$Eb, c=b, c=\imath x\varphi,\Gamma_2\Rightarrow\Delta_2\mid\Lambda_2\Rightarrow\Theta_2$}
\end{prooftree}

2.1. and the right premiss by:

\begin{prooftree}
	\AxiomC{$D_5$}
	\UnaryInfC{$Ec, \Gamma_3\Rightarrow\Delta_3,c=\imath x\varphi \mid c=\imath x\varphi,\varphi[x/c],\Lambda_3\Rightarrow\Theta_3$}
	\RightLabel{$(\mid\imath\Rightarrow 1)$}
	\UnaryInfC{$Ec, \Gamma_3\Rightarrow\Delta_3,c=\imath x\varphi \mid c=\imath x\varphi,\Lambda_3\Rightarrow\Theta_3$}
\end{prooftree}

and the sequent resulting by $(3-Cut)$ is:

$Eb, Ec, \Gamma_1,\Gamma_2,\Gamma_3\Rightarrow\Delta_1,\Delta_2,\Delta_3 \mid \Lambda_1,\Lambda_2,\Lambda_3\Rightarrow\Theta_1,\Theta_2,\Theta_3$

we transform the proof as follows:

{\small \begin{prooftree}
\AxiomC{$A_9$}
	\AxiomC{$P_l$}
	\AxiomC{$P_m$}	
	\AxiomC{$D_5$}
	\UnaryInfC{$Ec, \Gamma_3\Rightarrow\Delta_3,c=\imath x\varphi \mid c=\imath x\varphi,\varphi[x/c],\Lambda_3\Rightarrow\Theta_3$}
	\RightLabel{(3-Cut)}
	\TrinaryInfC{$Eb, Ec, \Gamma_1,\Gamma_2,\Gamma_3\Rightarrow\Delta_1,\Delta_2,\Delta_3 \mid\varphi[x/c], \Lambda_1,\Lambda_2,\Lambda_3\Rightarrow\Theta_1,\Theta_2,\Theta_3$}
	\RightLabel{(I-Cut)}
	\BinaryInfC{$Eb,Ec, \Gamma_1,\Gamma_1,\Gamma_2,\Gamma_3\Rightarrow\Delta_1,\Delta_1\Delta_2,\Delta_3\mid\Lambda_1,\Lambda_1,\Lambda_2,\Lambda_3\Rightarrow\Theta_1,\Theta_1,\Theta_2,\Theta_3$}
	\RightLabel{$(C)$}
	\UnaryInfC{$Eb,Ec, \Gamma_1,\Gamma_2,\Gamma_3\Rightarrow\Delta_1,\Delta_2,\Delta_3\mid\Lambda_1,\Lambda_2,\Lambda_3\Rightarrow\Theta_1,\Theta_2,\Theta_3$}
\end{prooftree}}

where $ A_9 $ stands for
\begin{prooftree}
	\AxiomC{$D_1$}
	\UnaryInfC{$\Gamma_1\Rightarrow\Delta_1\mid \Lambda_1\Rightarrow\Theta_1, \varphi[x/c]$}
\end{prooftree}

2.2. If the right premiss is derived:

\begin{prooftree}
\AxiomC{$A_{10}$}
\AxiomC{$A_{11}$}
	\RightLabel{$(\mid\imath\Rightarrow 2)$}
	\BinaryInfC{$Ec, \Gamma_3\Rightarrow\Delta_3,c=\imath x\varphi \mid c=\imath x\varphi,\Lambda_3\Rightarrow\Theta_3$}
\end{prooftree}

where $ A_{10} $ stands for
\begin{prooftree}
	\AxiomC{$D_5$}
	\UnaryInfC{$Ec, \Gamma_3\Rightarrow\Delta_3,c=\imath x\varphi, \varphi[x/b'] \mid c=\imath x\varphi,\Lambda_3\Rightarrow\Theta_3$}
\end{prooftree}

where $ A_{11} $ stands for
\begin{prooftree}
	\AxiomC{$D_6$}
	\UnaryInfC{$Ec, \Gamma_3\Rightarrow\Delta_3,c=\imath x\varphi \mid c=\imath x\varphi,c=b',\Lambda_3\Rightarrow\Theta_3$}
\end{prooftree}

note that also $Eb'$ must occur in $\Gamma_3$ ($b'$ distinct from $b$).

$(3-Cut)$ on $P_l, P_m$ and $D_5$ yields $S_7 :=  Eb, Ec,\Gamma_1,\Gamma_2, \Gamma_3\Rightarrow\Delta_1,\Delta_2,\Delta_3,\allowbreak\varphi[x/b'] \mid \Lambda_1,\Lambda_2,\Lambda_3\Rightarrow\Theta_1,\Theta_2,\Theta_3$

$(3-Cut)$ on $P_l, P_m$ and $D_6$ yields $S_8 :=  Eb, Ec,\Gamma_1,\Gamma_2, \Gamma_3\Rightarrow\Delta_1,\Delta_2,\Delta_3\mid c=b', \Lambda_1,\Lambda_2,\Lambda_3\Rightarrow\Theta_1,\Theta_2,\Theta_3$

with cuts admissible by induction on the height.

By substitution lemma on $D_2$ we get $S_9 := Eb', \varphi[x/b'], \Gamma_1\Rightarrow\Delta_1,\mid \Lambda_1\Rightarrow\Theta_1, c=b'$

$(I-Cut)$ and $(O-Cut)$ made on these three sequents yield the desired result.

2.3. Finally let the right premiss be obtained by:

\begin{prooftree}
\AxiomC{$A_{12}$}
\AxiomC{$A_{13}$}
	\RightLabel{$(\Rightarrow\imath\mid)$}
	\BinaryInfC{$Ec,\Gamma_3\Rightarrow\Delta_3,c=\imath x\varphi \mid c=\imath x\varphi,\Lambda_3\Rightarrow\Theta_3$}
\end{prooftree}

where $ A_{12} $ stands for
\begin{prooftree}
	\AxiomC{$D_5$}
	\UnaryInfC{$Ec, \Gamma_3\Rightarrow\Delta_3,\varphi[x/c] \mid c=\imath x\varphi,\Lambda_3\Rightarrow\Theta_3$}
\end{prooftree}

and $ A_{13} $ stands for
\begin{prooftree}
	\AxiomC{$D_6$}
	\UnaryInfC{$Ea',Ec, \Gamma_3\Rightarrow\Delta_3,c=a'\mid c=\imath x\varphi,\varphi[x/a'],\Lambda_3\Rightarrow\Theta_3$}
\end{prooftree}

where $a'$ is eigenvariable different from $a$.

By substitution lemma on $D_6$ we get $S := Eb,Ec,\Gamma_3\Rightarrow\Delta_3,c=b\mid c=\imath x\varphi,\varphi[x/b], \Lambda_3\Rightarrow\Theta_3$ (the proof has the same height). $S$ by $(I-Cut)$ with $P_l$ gives $S_1 := Eb,Ec,\Gamma_1, \Gamma_3\Rightarrow\Delta_1,\Delta_3,c=b\mid \varphi[x/b], \Lambda_1\Lambda_3\Rightarrow\Theta_1,\Theta_3$

$(3-Cut)$ on $P_l, D_3$ and $P_r$ yields $S_2 :=  Eb, Ec,\Gamma_1,\Gamma_2, \Gamma_3\Rightarrow\Delta_1,\Delta_2,\Delta_3 \mid \Lambda_1,\Lambda_2,\Lambda_3\Rightarrow\Theta_1,\Theta_2,\Theta_3,\varphi[x/b]$

$(3-Cut)$ on $P_l, D_4$ and $P_r$ yields $S_3 :=  Eb, Ec, c=b,\Gamma_1,\Gamma_2, \Gamma_3\Rightarrow\Delta_1,\Delta_2,\allowbreak\Delta_3\mid \Lambda_1,\Lambda_2,\Lambda_3\Rightarrow\Theta_1,\Theta_2,\Theta_3$

All these cuts are admissible by induction on the height.

$S_1, S_2, S_3$ by $(I-Cut)$ and $(O-Cut)$ yield the desired result. 

\end{proof}

\section{Conclusion}\label{Conclusion}
To the best of our knowledge this paper offers the first proof-theoretic study of NFL with identity and DD. The most important task for the presented variant of NFL is to find a satisfactory solution which is cut free and does not restrict the treatment of DD to terms which do not admit nesting of other DD inside. Such terms like `the owner of the biggest diamond' should be dealt with in a direct way.
Moreover, \eqref{L} characterises only the behaviour of proper DD so the present system provides the weakest theory of DD.   
Since, as we mentioned, NFL seems to be the most natural framework for developing a satisfactory theory of improper DD, the most important task for the future is to develop stronger theories of DD.

Our intention is also to explore alternative approaches to NFL (see \cite{Lehmann,Woodruff}) based on different definitions of propositional connectives, viable options concerning the treatment of existence predicate, identity, and alternative interpretations of quantifiers. Moreover, taking into account that careless treatment of DD leads to contradictions, it is an important task to explore its behaviour as founded on paraconsistent logics, in the framework of BSC.

The next step would be to explore the problems of proof search and automatisation for NFL and related systems. It seems that the completeness proof of \cite{PavlovicGratzl23}, allowing for building countermodels, can be extended to cover identity and DD by using techniques applied in \cite{IndZaw2} or \cite{IndNils}. Preparation of analytic tableau systems for this kind of logics and their implementation would be another welcome output, when the theoretical foundations will be firmly established.

\paragraph{Acknowledgements.} We would like to thank anonymous reviewers for their valuable comments and suggestions.

\end{document}